%
\documentclass[runningheads]{llncs}

\usepackage[misc]{ifsym}

\usepackage{graphicx}
\usepackage{mathrsfs}
\usepackage[title]{appendix}
\usepackage[linesnumbered,ruled,vlined]{algorithm2e}
\SetAlFnt{\small}
\SetAlCapFnt{\small}
\SetAlCapNameFnt{\small}
\usepackage{algorithmic}
\algsetup{linenosize=\tiny}

\SetKwProg{Fn}{Function}{}{end}\SetKwFunction{FRecurs}{FnRecursive}%
\SetAlgoLongEnd
\DontPrintSemicolon

%

\begin{document}
\title{Multi-MEC Cooperation Based VR Video Transmission and Cache \\ using K-Shortest Paths Optimization\thanks{Supported by the Sichuan Provincial Science and Technology Department Project (No.2021YFHO171).}}

%
\titlerunning{Multi-MEC Cooperation Based Transmission and Cache for VR video}
%
\author{Jingwen Xia \and Luyao Chen \and Yong Tang\textsuperscript{(\Letter)} \and Ting Yang \and Wenyong Wang} 

\institute{University of Electronic Science and Technology of China, Chengdu, China \email{worldgulit@uestc.edu.cn}}

\authorrunning{J. Xia et al.}
%
%
\maketitle              

\begin{abstract}
In recent network architectures, multi-MEC cooperative caching has been introduced to reduce the transmission latency of VR videos, in which MEC servers' computing and caching capability are utilized to optimize the transmission process. However, many solutions that use the computing capability of MEC servers ignore the additional arithmetic power consumed by the codec process, thus making them infeasible. Besides, the minimum cache unit is usually the entire VR video, which makes caching inefficient.

To address these challenges, we split VR videos into tile files for caching based on the current popular network architecture and provide a reliable transmission mechanism and an effective caching strategy. 
Since the number of different tile files $N$ is too large, the current cooperative caching algorithms do not cope with such large-scale input data. 
We further analyze the problem and propose an optimized k-shortest paths (OKSP) algorithm with an upper bound time complexity of $O((K \cdot M + N) \cdot M \cdot \log N))$, and suitable for shortest paths with restricted number of edges, where $K$ is the total number of tiles that all $M$ MEC servers can cache in the collaboration domain. And we prove the OKSP algorithm can compute the caching scheme with the lowest average latency in any case, which means the solution given is the exact solution. 
The simulation results show that the OKSP algorithm has excellent speed for solving large-scale data and consistently outperforms other caching algorithms in the experiments.

\keywords{360$^\circ$ virtual reality video \and Virtual reality \and Mobile edge computing \and K-shortest paths.}
\end{abstract}
\section{Introduction}
Virtual Reality (VR) is emerging as a killer use case of 5G wireless networks. Users can watch VR videos with mobile devices such as VR glasses to obtain an immersive experience. The VR video mentioned in this paper refers to the panoramic video of horizontal 360$^\circ \times$ vertical 180$^\circ$.

The transmission problem of VR video has attracted extensive attention in academia and industry \cite{hsu2020mec}. Under the premise of ensuring a good viewing experience, the bit rate required for VR video is much higher than that of traditional video. If users fetch all the VR videos they need from the data center, the bandwidth consumption of the 5G network backhaul link will increase dramatically. Besides, an immersive VR experience requires the network to provide ultra-low transmission latency (less than 20 ms) to avoid user discomfort \cite{sun2019communications}. However, under the traditional network architecture, the traffic in the core network and the Internet changes dynamically, which will cause fluctuations in the network load, resulting in unpredictable performance degradation of watching VR videos. 

The mobile edge computing (MEC) proposed in the 5G network technology can greatly alleviate the above problem \cite{jedari2020video}. The MEC server is introduced in the traditional network architecture and deployed near the base station where the user accesses. The MEC server caches the current popular videos, thus significantly mitigating the request latency of mobile VR devices. However, compared with data centers, the resources of a single MEC server is very insufficient \cite{ndikumana2019joint}. To fully utilize the limited storage space to meet more user requests, it is an inevitable trend to collaborate with multiple MEC servers in close geographical proximity and share their storage space.

The computing capability of the MEC server can also be utilized for VR video transmission. Before further analysis, we need to know the process of VR video from production to rendering \cite{mangiante2017vr}, which includes (1) \textit{shooting and stitching}: capturing video by multi-camera array, then projecting the captured video frames into the spherical space and then stitching to get the panoramic video; (2) \textit{mapping and encoding}: mapping and projecting the panoramic video image into the 2D plane (e.g., using equirectangular projection technique) to get the 2D plane image, and then encode the video using universal video encoder; (3) \textit{transmission}: transmitting the encoded video to the VR device through the network; (4) \textit{rendering}: the VR device decodes the VR video, and projects the 2D video frames into the spherical space and then displays them on the screen.

Many papers \cite{cheng2021design,hsu2020mec,liu2021rendering,sun2019communications} have studied offloading the rendering task in (4) to the MEC server, and later transmitting the rendered video to the VR device, to reduce the overall latency. However, there are many problems with this optimization. 

The operation involved in step (4) is simply applying a series of trigonometric functions to the coordinates of each point in the 2D video frame. Constant time complexity can be considered for an operation on a particular pixel point \cite{moon1988spherical}, so the rendering process does not take much time. After offloading the rendering task to the MEC server, the VR video needs to be decoded before rendering. The decoding operation will take extra time. Because the MEC server may be equipped with a GPU dedicated to processing video, it will reduce the rendering time. However, after rendering, the VR video size equals the original uncompressed video size, which depends on the compression ratio (could be as high as 1000:1 in H.265 encoding \cite{noauthor_meeting_2013}). Without compression, transmitting a video with such a high bitrate is impractical. But if the MEC server compresses the rendered video, the time consumed by additional codecs is even higher than the rendering time saved, leaving aside the loss of picture quality. Overall, using the computing capability of the MEC server to handle the rendering task does not improve the user viewing VR video experience but makes the problem more and more complicated.

But the computational capability of the MEC server can be utilized for compute-intensive tasks such as predicting user head motion to reduce the amount of data transfer. When a user watches a 360$^\circ$ VR video, the field of view (FoV) usually accounts for only about 12.5\% of the panoramic field of view (90$^\circ \times $ 90$^\circ / $ 360 $^\circ \times $ 180$^\circ$ \cite{hou_predictive_2021}). In other words, if the entire VR video is transmitted, most of the frame area is not visible to the user during viewing. To reduce the problem of large bandwidth consumption and high latency caused by the transmission of 360$^\circ$ surround images, most existing transmission schemes \cite{guntur2012tile,kimata2012mobile,mavlankar2010video} transmit only the high bit rate of the image within the user's FoV. The specific operation is that in the VR video production process (2), the panoramic video after equirectangular projection is spatially and equidistantly divided into $x \times y$ independently codable and decodable whole-length tile video files \cite{cheng2021design}, whose duration is equal to that of the original video. Due to the users' gaze shifting while watching the video, the request rate of whole-length tile video fluctuates with time and maybe low at certain times \cite{mahzari_fov-aware_2018}. If the whole-length tile video file is the minimum cache unit, the MEC server's cache efficiency is low, so it can be split into equal-sized tile videos in time, which we call tile files or tiles.

If the request only occurs when the VR device detects the need to render a particular tile file, it will result in a very short tolerant latency (e.g., less than 20 ms), and the user experience will be poor. The split operation will be useless if the whole frame is cached. Thus, it is necessary to predict and cache the tile files that users may watch in advance, which requires the prediction of the users' head motion trajectory in the coming period. Related research \cite{rondon_unified_2020} shows that various head motion prediction models have been proposed to predict short-term head trajectory changes with high accuracy in recent years. In the tile-based prediction model presented in the literature \cite{fan_fixation_2017}, prediction accuracy can reach 84.22\% when the prediction period is set to 1s.

Even though the existing prediction model has high accuracy, there are still cases in which the required video frame is not requested in advance when the VR device is ready to render. In this case, if the entire tile file belonging to the missing video frame is requested, it is easy to time out and causes the user to feel the local screen lag. To solve this problem, servers can extract the currently missing tile frames for transmission instead of transmitting the entire tile files. However, only transmitting the missing frames presents a problem. Because in the mainstream video encoding \cite{filippov_video_2020} includes 3 types of frames: I (Intra-coded), P (Predictive), B (Bi-predictive). I frames include complete image data. P or B frames contain information about the changes between the previous and subsequent I or P frames (which is used to create the resulting image). So if the missing frame is a P or B frame, it cannot be directly decoded on the VR device. As a result, a feasible remediation method on the server-side is to reconstruct the missing frame, compress it into an I frame, and then respond to the VR device.

In summary, head motion prediction and remediation to deal with prediction failure can be offloaded to the MEC server, allowing the VR device to focus on video decoding and displaying. Based on the optimization points analyzed above, we aim to build an efficient mobile VR video transmission framework with a multi-MEC cooperative caching mechanism. Our contributions are summarized as follows:
\begin{itemize}
	\item[-] We describe a multi-MEC cooperative network architecture and provide a reliable transmission mechanism.
	\item[-] For the large-scale tile caching problem, we analyze the caching profit of tiles under the cooperative architecture and equivalently rebuild the tile caching problem into the k-shortest problem.
	\item[-] Due to the high time complexity of the original k-shortest paths algorithm \cite{suurballe_disjoint_1974} in this paper's application scenario, we propose an optimized k-shortest paths (OKSP) algorithm. And upper bound time complexity of the OKSP algorithm is $O((K\cdot M + N) \cdot M \cdot log N))$, where $K$ is the total number of tiles that all $M$ MEC servers can cache in the collaboration domain, and $N$ is the number of different tile files. We also prove the solution of the OKSP algorithm is the exact solution.
	
\end{itemize} 

The remainder of this paper is organized as follows.  Sect.\ref{sec:Related Work} reviews the relevant work. Sect.\ref{sec:System Architecture} presents the system architecture and transmission mechanism. Sect.\ref{sec:Cache Strategy} analyzes the cache profit and shows the specific optimization details of the OKSP algorithm. Sect.\ref{sec:Experiment} discusses the simulation results. Sect.\ref{sec:Conclusion} concludes this paper.

\section{Related Work}
\label{sec:Related Work}
In recent years, some papers have utilized the characteristics of VR video in their transmission network architecture, since the user can only view a portion of the panoramic video at any given moment, transmitting the user's current FoV results in significant savings in bandwidth consumption. As a result, \cite{gaddam_tiling_2016} proposes a tile-based scheme to transmit VR videos. \cite{fan_fixation_2017} further improves the quality of user experience by deriving a user's head motion trajectory model based on real data training. \cite{fan_fixation_2017,gaddam_tiling_2016} focus on the prediction of head motion when users watch VR videos, describing the details of the prediction model and giving experimental results.
The linear regression method is adopted for trajectory tracking in \cite{hu2019optimal,nguyen2019optimal}. The authors in \cite{ndikumana2019joint} propose a method for determining tile resolution based on user head movements and network conditions. 
A 360$^\circ$ video streaming scheme is proposed in \cite{hu2019optimal} to maximize user experience by predicting the user's viewport and prefetch video segments that will be viewed soon.

Some studies adopt different bitrate transmission for the inner and outer regions of FoV to reduce the data transmission volume. Ghosh et al. \cite{ghosh2017rate} assume the FoV is known exactly beforehand before fetching the invisible portion at the lowest resolution and determining the visible portion's resolution using bandwidth prediction. A heuristic method \cite{ozcinar2017viewport} is proposed for allocating video resolution by assigning different weights to different regions inside and outside the current FoV. \cite{ozcinar2018omnidirectional,zhang2018cache} calculate the popularity of the current FoV to predict the popularity of FoV for a while in the future, among which \cite{ozcinar2018omnidirectional} proposes a novel algorithm for assigning bitrate based on viewpoints.

Also, some papers use the aid of MEC technology in VR video transmission. \cite{cheng2021design,sun2019communications} propose a MEC-based network architecture to deliver mobile VR videos. In \cite{sun2019communications}, a FoV-based caching architecture is proposed to save bandwidth consumption and meet the low latency requirement of VR videos. \cite{cheng2021design} proposes a MEC-based architecture for 360$^\circ$ VR video delivery and actively exploits the predictable nature of user FoV to cache VR videos. However, both studies work only with a single MEC server and do not consider the potential of multi-MEC cooperation.

Hou et al. \cite{hou_proactive_2018}, Ndikumana et al. \cite{ndikumana2019joint} and Wang et al. \cite{wang_zone-based_2019} notice the advantages of multi-MEC cooperation. Wang et al. \cite{wang_zone-based_2019} propose a zone-based cooperative content caching and delivery scheme and develops a heuristic cooperative content caching strategy. Hou et al. \cite{hou_proactive_2018} also propose a service architecture based on multi-MEC cooperation and a cooperative caching strategy based on migration learning under the service architecture. Ndikumana et al. \cite{ndikumana2019joint} propose a joint computation, caching, communication, and control optimized caching strategy based on a multi-MEC cooperative architecture. However, these studies all concentrate on general video content and do not exploit the features of VR video.

Using multi-MEC cooperative transmission, some studies exploit VR's characteristics. Ge et al. \cite{ge_multipath_2017-1} propose a multipath cooperative routing scheme to facilitate transmission of 360-degree MVRV among edge data centers, base stations, and users. A mobile edge computing system \cite{yang_communication-constrained_2018} is proposed which utilizes the computing and caching resources of mobile VR devices to reduce communication-resource consumption. To enable high-quality VR streaming, Chakareski et al. \cite{chakareski_vrar_2017} integrate scalable layered 360 video tiling, viewport-adaptive rate allocation, optimal resource allocation, and edge caching.

\section{System Architecture}
\label{sec:System Architecture}

\subsection{Network Architecture}
\begin{figure}
	\centering
	\includegraphics[width=0.8\textwidth]{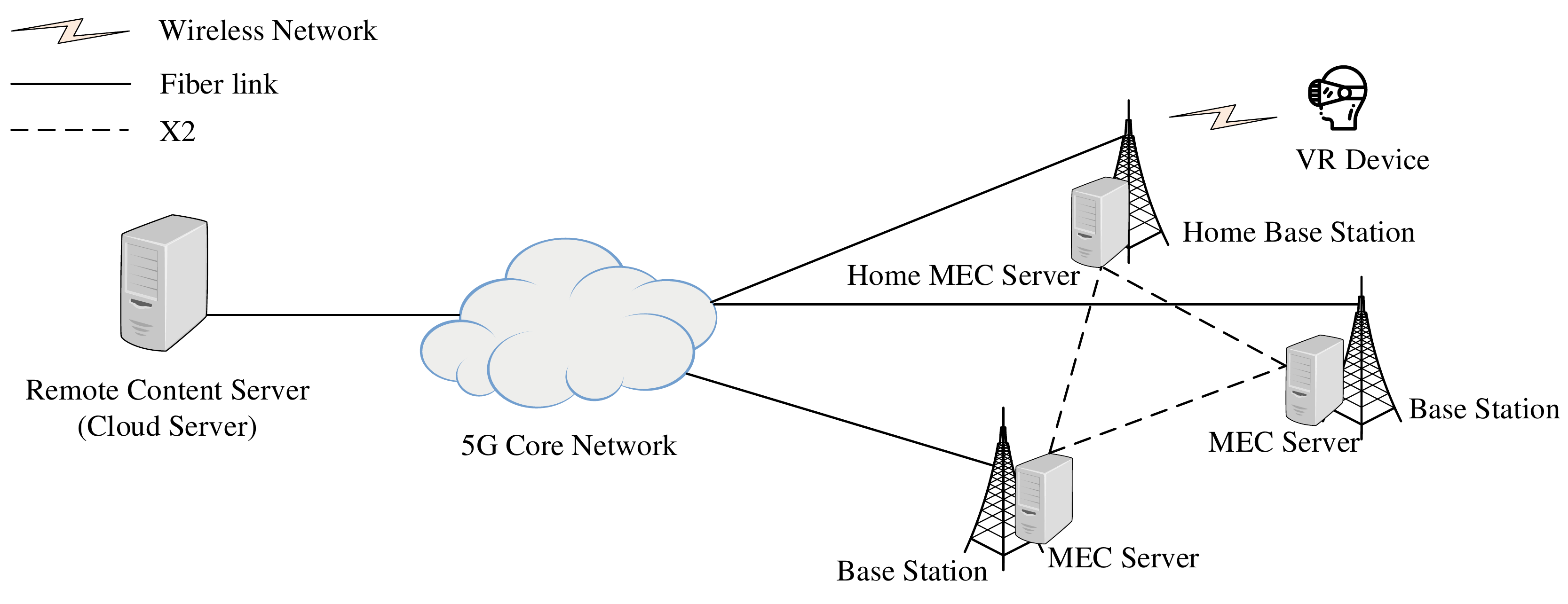}
	\caption{Network architecture diagram.} \label{net-arch-fig}
\end{figure}
The network architecture based on multi-MEC cooperation is shown in Fig.~\ref{net-arch-fig}. The network architecture consists of three parts: a remote content server(e.g., cloud server) located in a data center, a 5G core network, and a MEC collaboration domain. We assume the cloud server has a massive storage space on which all the encoded tile files are stored. Since the coverage area of a base station is relatively tiny ($\le$ 500m \cite{noauthor_how_2020}), 
multiple base stations are usually deployed in a living/commercial area, so these base stations are close geographically. A MEC collaboration domain can be artificially created by dividing MEC servers near these base stations in order to share computing and caching resources.

Assume that there are $M$ MEC servers in a collaboration domain, which is denoted as $\{$MEC server 1,..., MEC server $M\}$. Each MEC server maintains a resource access table, which records the tile files it cached. The base station directly connected to the VR device are called home base station and the MEC server deployed near home base station are called home MEC server.


\subsection{Transmission Process}
As discussed in previous, it would be necessary for VR devices to prefetch the tiles that will be used in the future period. Various head motion prediction models have recently been proposed to predict short-term trajectory changes with high accuracy \cite{rondon_unified_2020}. Therefore, the prediction task must continuously predict during the user's viewing process. The prediction work requires a certain amount of computation, and offloading it to the home MEC server can relieve the load on the processor of the VR device. While the VR device focuses only on decoding and displaying VR videos, the home MEC server is responsible for continuously predicting the user's short-term head motion trajectory and transmitting the tile files that may be involved based on the predictions.

However, due to the prediction bias, the VR device may find that the tile file is not requested in advance when rendering the VR video. In this case, the VR device must immediately request the currently missing tile video frames. The user's current viewing information can be sent to the home MEC server each time the VR device refreshes the screen to detect the prediction misses early and remedy the situation.

In this paper, the VR device and the home MEC server each perform the following transmission process.

\textbf{For VR Devices.} When the user starts watching the video, the VR device establishes a long connection with its home MEC server. The current playback status is detected every screen rendering interval $\Delta t_{fps}$ time. If the video playback is finished, the interaction process end and the VR device disconnects from its home MEC server. Otherwise, perform the screen rendering operation; simultaneously, the user's current FoV and the video playback time $t_v$ are sent to its home MEC server.

\textbf{For MEC servers.}The home MEC server will record the cache status of the users within its service area, when the MEC server receives an information packet from a VR device, it can judge whether the user is missing the render frames based on the FoV and  $t_v$ of this packet. Perform the remediation operation first if it is missing; otherwise, perform the prediction operation directly.

\textbf{Remediation Operation.} 
As mentioned in the introduction, since the request's tolerance latency (20ms) is rather harsh at this point, the home MEC server responds by returning the missing frames of the VR device instead of the tile files.
Based on the resource access table, the server will find all the generated missing video frames in this rendering process and sends them back to the VR device, and then perform prediction operation.

\textbf{Prediction Operation.} The home MEC server adds the user's current FoV to the user's trajectory and runs the prediction model to predict the user's trajectory for the next period $(t,t+\Delta t]$ and the tile files involved in this trajectory.
After that, the server will find all the tile files that the device is not cached and sends them back to the VR device.
Since a tile file contains multiple frames, some may be earlier than $t_v$, which means they will never be watched, so unless necessary, we only send the frames later than $t_v$. 

The exceptions are due to the video compression method. P and B frames cannot be directly decoded on the VR device, so we need to look forward to the nearest I frame position and send from that position. Considering the size of a single tile file is small, all frames later than $t_v$ are sent together even if they may not be rendered in the future.
	
\section{Cache Strategy}
\label{sec:Cache Strategy}
To design an ideal caching algorithm, we will analyze the latency involved in each possible request path, establish a profit model for the amount of latency reduction, and finally propose an algorithm for the k edge-disjoint shortest paths problem with a constraint on the edge number of shortest paths. The algorithm will eventually give the highest profit caching scheme.

\subsection{Delay Analysis}
When computing the caching scheme, assume that $N$ different tiles have been requested in the entire MEC collaboration domain, and the tiles are denoted as $\{$tile$_1$,...,tile$_N\}$. A request$_r$ about tile$_n$ from a VR device will have a data size of $sz_r$, a bandwidth of $Bu_r$ between the user and MEC server, a bandwidth of $B_m$ between MEC servers, and a bandwidth of $B_{dc}$ between MEC servers and the cloud server. Both request and response packets on the MEC servers will experience queuing before they are processed and sent. We assume that all MEC servers in the collaboration domain are moderately loaded when running, so the queuing delay would not be too long. The $T_{q_1}$ is used to represent the upper limit of delay time of waiting to be processed after the request arrives at the receive buffer of the MEC server, and $T_{q_2}$ is used to represent the upper limit of queuing delay before the response data is sent. All request goes through one or two communication processes as follows.\\

\textbf{VR Device $\leftrightarrow$ home MEC server.} There is only one hop between the VR device and its home MEC server. Since radio waves travel at the speed of light, and the distance between them is usually less than 1 km, propagation delays can be negligible. The delay of this case is
\begin{equation}
	ch(sz_r,Bu_r) = \frac{sz_r}{Bu_r} + T_{q1} + T_{q2} .\\
\end{equation}

\textbf{Home MEC server $\leftrightarrow$ Other MEC server.} Base stations in the collaborative domain communicate via the X2 interface\cite{ndikumana2019joint}, and the underlying physical link is optical fiber, a one-hop wired link. It is not more than 10 km between the MEC servers, so propagation delay can also be negligible. MEC servers in the collaborative domain have similar network configurations, so the communication bandwidth between any two servers is the same. The delay of this case is
\begin{equation}
	cmm(sz_r) = \frac{sz_r}{B_m} + T_{q1} + T_{q2} .\\
\end{equation}

\textbf{Any MEC server $\leftrightarrow$ Data  Center.} As the network conditions from the home MEC server to the remote data center are complex, including multi-hop links and network fluctuations due to line instability, it is impossible to calculate the delay accurately. In order to determine the time it takes packets to pass through these links, we use the communication delay $ T_{m \leftrightarrow dc}$, which fluctuates within a range. The delay of this case is
\begin{equation}
	cc(sz_r) = \frac{sz_r}{B_{dc}}+2 \cdot T_{m \leftrightarrow dc}.
\end{equation}

As shown above, the system of equations can be used to estimate the time delay of both prefetch and remediation requests. However, remediation requests will incur processing latency for extracting the frames required for compression from the missing tile file. Because the closest I frame information is on-demand, remediation can only be performed on the MEC server with the missing tile file. The processing delay is independent of the cache strategy. As a result, we do not consider that the processing delay can be reduced.

Afterwards, we discuss how the cache distribution of each tile file affects the request delay. Let the requests about tile$_n$ on MEC server $m$ be $R_{m,n}$. We find that once any replica of tile$_n$ is in the collaboration domain, it generates a global profit
\begin{equation}
	v_n^{glo} = \sum_{m = 1}^{M}\sum_{r \in R_{n, m}} cc(sz_r) - cmm(sz_r),
\end{equation}
Because once a replica exists within the collaboration domain, the path of all requests to tile$_n$ shifts and the request endpoint changes from the cloud server to the MEC server within the collaboration domain. This global profit can only be accrued once. Also, The tile$_n$ cache on a special MEC server $m$ generates a local profit
\begin{equation}
	v_{n,m}^{loc} = \sum_{r \in R_{n, m}}cmm(sz_r),
\end{equation}
Since all tile$_n$ requests within the service area of MEC server $m$ are handled directly, inter-MEC server communications are eliminated.

\subsection{Abstraction Method}
\begin{figure}
	\centering
	\includegraphics[width=0.7\textwidth]{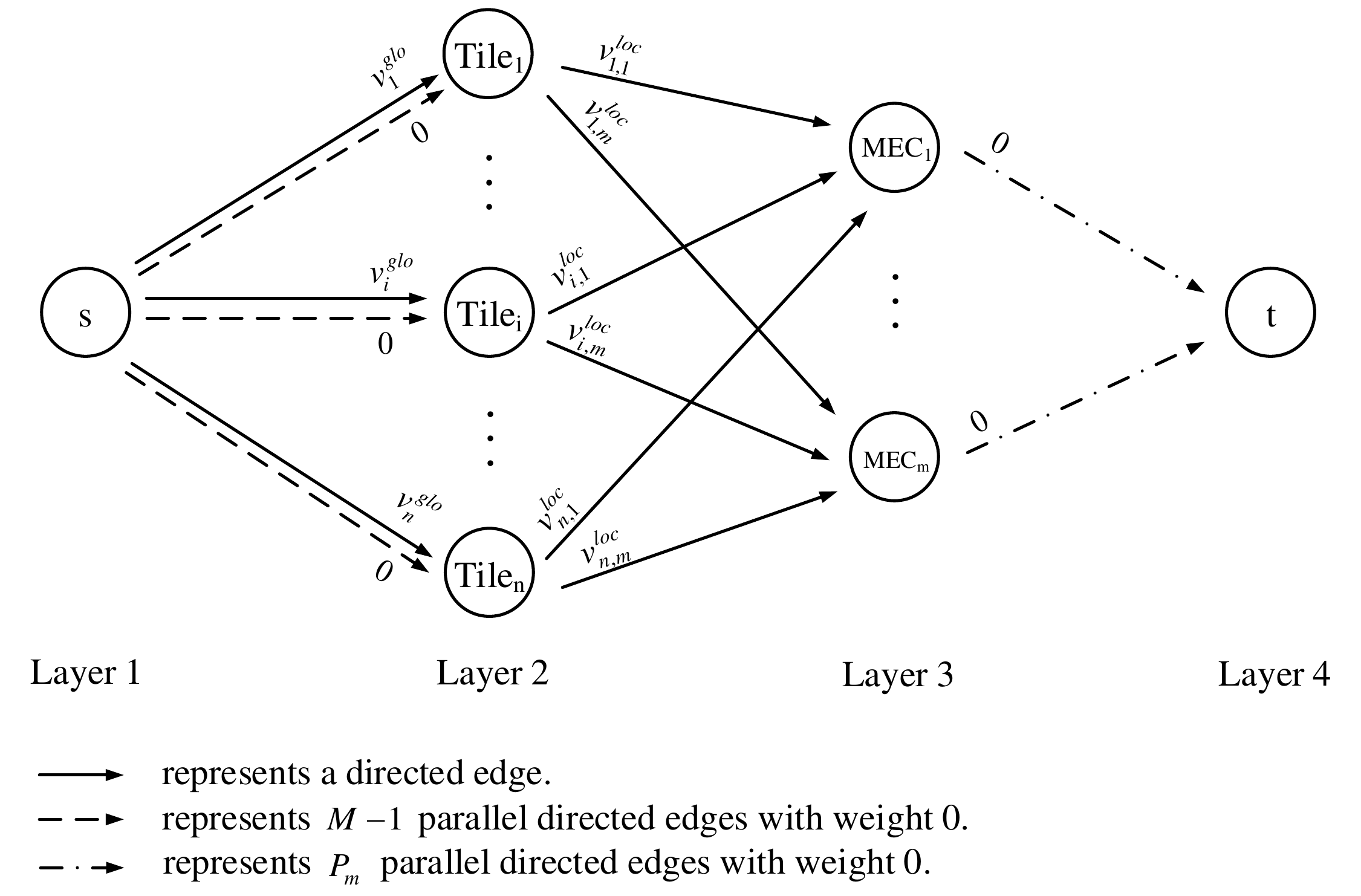}
	\caption{The illustration on reformulating cache problem to k-shortest paths problem.} \label{ksp1-fig}
\end{figure}
After we split all original VR videos into equal size tile files, Let $S$ denote the size of a tile file, and $P_m=\frac{C_m}{S}$ denote the maximum number of tile files stored in MEC server $m$, where $C_m$ is the storage capacity of the MEC server $m$. In our specific case, the cache distributed problem can be reformulated as the graph structure $G=(V, E)$ with a single source and a single sink illustrated in Fig.~\ref{ksp1-fig}, which is a directed acyclic graph (DAG). 

Nodes with depth $d$ are called layer-$d$ nodes for convenience. In the DAG $G$, there are only four layers. Layer-1 is the source node $s$, layer-2 is the tile nodes corresponding to each tile file, layer-3 is the MEC nodes corresponding to the MEC servers in the collaboration domain, and layer-4 is the sink node $t$.

If there is only one directed edge from node $i$ to node $j$, it will be represented as $e_{i, j}^t$, and if there is more than one, $e_{i, j}^t$ will be used to represent the $t^{th}$ parallel directed edges. Edge weights are expressed as $c(e_{i, j}^{t})$. The details of the $G$ will be described below.

\textbf{\boldmath{$s$} $\rightarrow $ Tile Nodes.} In the collaboration domain, each tile file will store at most $M$ replicas. Therefore, there are $M$ parallel directed edges \{$e_{s, i}^{1},...,e_{s, i}^M$\} from node $s$ to tile node $i$. Since the global profit can only be accumulated once, $c(e_{s, i}^{1})$ = $v_i^{glo}$, and $c(e_{s, i}^{2}),...,c(e_{s, i}^M)$ should be 0. 

\textbf{Tile Nodes $\rightarrow $ MEC Nodes.} For any layer-3 MEC node $j$, its optional storage object includes all tile files, so for any layer-2 tile node $i$, there is a directed edge $e_{i, j}$, and $c(e_{i, j}) = v_{i,j}^{loc}$. 

\textbf{MEC Nodes $\rightarrow$ \boldmath{$t$}.} To limit MEC's storage capacity, for any layer-3 MEC node $j$, there are $M$ parallel directed edges \{$e_{j, t}^{1},...,e_{j, t}^M$\} from $j$ to $t$ with 0 weight.

Any path between source node $s$ and sink node $t$ in $G$ represents a caching operation in the original problem, selecting a simple directed path  $<s, i, j, t>$ is equivalent to storing tile file $i$ on MEC server $j$. After negating the weights of all edges, the original problem is reformulated as a \textit{k shortest edge-disjoint paths} problem, which has been extensively studied in the literature on network optimization \cite{berclaz_multiple_2011}.

Let $K = \sum_{m=1}^{M} P_m$, which indicates the maximum number of tile files the collaboration domain can store. Our goal became to find the $K$ paths \{$p_1, ..., p_K$\} from source node $s$ to sink node $t$, such that the total cost of the paths is minimum, and the problem can be solved using the k-shortest paths (KSP) algorithm \cite{suurballe_disjoint_1974}. 

Algorithm \ref{orignal-skp} gives a pseudocode summary of the original KSP algorithm, and a brief workflow of KSP algorithm can be described as (1) construct the graph $G$ as mentioned above; (2) compute the shortest path $p$ from $s$ to $t$; (3) in $G$, reverse the direction of all edges in $p$, and negating their weights; (4) repeat step (2)(3) $k-1$ times.

In $l^{th}$ iteration, an augmenting path $ap_l$ with be calculated by shortest path algorithm, let $AP_l = \{ap_1, ..., ap_l\}$ be the set of all $l$ augmenting paths computed up to iteration $l$, and the negative edges of $G$ will form the shortest paths set $P_l = \{p_1, ..., p_l\}$. In particular, after every iteration, there are no negative cycles in the graph $G$, which is proved in \cite{suurballe_disjoint_1974}.

\subsection{Model Analysis}

\begin{table}[ht]
	\centering
	\caption{Notation Definition}\label{tab1}
	\begin{tabular}{|c|c|}
		\hline
		\textbf{Symbol} & \textbf{Description} \\
		\hline
		$N$ & Number of different tile files  \\
		\hline
		$M$ & Number of MEC servers in the collaboration domain \\
		\hline
		$P_m$ & The MEC server $m$ can store a maximum of $P_m$ tile files \\
		\hline
		$K$ & The collaboration domain can store a maximum of $K$ tile files \\
		\hline
		$e_{i, j}^{t}$ & The $t^{th}$ parallel directed edge from node $i$ to node $j$ \\ 
		\hline
		$c(e_{i, j}^{t})$ & The weight of edge $e_{i, j}^{t}$\\ 
		\hline
	\end{tabular}
\end{table}
The worst-case complexity of Algorithm \ref{orignal-skp} is $O(K \cdot M \cdot N)$, which means the original KSP algorithm can not handle the situation where $N$ becomes larger. Below, we demonstrate how some optimizations can significantly reduce the algorithm's time complexity. We find that the layer-3 nodes are significantly fewer than layer-2 nodes ($M \ll N$), and we can optimize the KSP algorithm from this point. Eventually,  we propose the \textit{optimized k-shortest paths} (OSKP) algorithm.

\begin{theorem}
	In $G$, there exists a shortest path set $AP^\ast_K = \{ap_1,...,ap_K\}$; for any path $ap \in AP^\ast_K$, every node in $ap$ will only appear once, except for the tile nodes in layer-2.
\end{theorem}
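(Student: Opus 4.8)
The plan is to build the argument on the one invariant that is already guaranteed, namely that after every iteration of the KSP procedure the working graph $G$ contains no negative cycle (this is exactly the property established in \cite{suurballe_disjoint_1974}). First I would record the residual edge types that exist once some augmenting paths have been reversed: out of $s$ only the forward edges $e_{s,i}^{t}$ to tile nodes are available; at a tile node the outgoing edges are either a forward $e_{i,j}$ to a MEC node or a reversal $i\to s$; at a MEC node $j$ they are either a forward $e_{j,t}^{t}$ or a reversal $j\to i$; and $t$ is reached only through a forward $e_{j,t}^{t}$. These incidences force every augmenting path into the alternating ``zigzag'' shape $s\to\text{tile}\to\text{MEC}\to\text{tile}\to\cdots\to\text{MEC}\to t$, in which the tile (layer-2) nodes act purely as connectors between consecutive MEC nodes and between $s$ and the first MEC node. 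This alternation is what lets me separate the \emph{backbone} $\{s,t\}$ together with the MEC nodes from the \emph{connectors}, the tile nodes.

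The heart of the proof is a cycle-excision (exchange) step applied only to the backbone. Starting from any set of $K$ augmenting paths that realises the optimal cost, suppose some $ap_l$ revisits a backbone node $x$ (an MEC node, or $s$, or $t$) at positions $p_1<p_2$. The closed subwalk $C$ running from the first to the second occurrence of $x$ is a cycle in the residual graph, so the no-negative-cycle invariant gives $c(C)\ge 0$. Deleting $C$ and splicing the edge entering the first copy of $x$ to the edge leaving the second copy yields a walk $ap_l'$ from $s$ to $t$ that uses a subset of the edges of $ap_l$, hence remains admissible in the residual graph, with $c(ap_l')=c(ap_l)-c(C)\le c(ap_l)$. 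Because $ap_l$ was a shortest $s$-$t$ path, equality must hold, so $c(C)=0$ and $ap_l'$ is again a shortest augmenting path. Iterating removes every repeated backbone node, and carrying this out on each of the $K$ paths produces a set $AP^\ast_K$ of the same optimal cost in which $s$, $t$, and every MEC node occur at most once.

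Finally I would explain why the tile nodes are deliberately exempted. In the zigzag a tile $i$ is only a connector, and the $M$ parallel source edges $e_{s,i}^{1},\dots,e_{s,i}^{M}$ together with the forward edges $e_{i,j}$ allow the same tile $i$ to serve legitimately as the optimal connector for two different backbone transitions of one path, via two distinct parallel edges; the two visits to $i$ are then linked by a zero-cost cycle that the excision step neither needs nor is obliged to cut. This is precisely the latitude the optimized algorithm exploits: it keeps only the $O(M)$-length MEC backbone explicit and selects connectors out of the $N$ tiles with a priority queue (the origin of the $\log N$ factor), so it must tolerate one tile being picked twice on a single path. I expect the main obstacle to be making the excision rigorous in the presence of parallel and reversed edges -- checking that after splicing the object is still an admissible augmenting path (no residual edge reused, forward/reverse availability respected) and that cleaning up the $K$ paths independently stays consistent with one optimal min-cost realisation. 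The load-bearing step is the $c(C)\ge 0$ argument at MEC nodes; the tile-node exemption then comes for free.
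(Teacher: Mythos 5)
Your proposal is correct and takes essentially the same route as the paper: the paper's own (two-line) proof likewise invokes the no-negative-cycle invariant from \cite{suurballe_disjoint_1974} and repeatedly excises the resulting nonnegative-cost cycles from the paths of a KSP-optimal set, noting that this cannot increase path length. Your version simply spells out what the paper leaves implicit --- the layered zigzag structure of augmenting paths, the splice at a repeated backbone node, the conclusion that the excised cycle has cost exactly zero, and why the tile-node exemption requires no argument.
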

\begin{proof}
	Since there are no negative cycles in $G$ at each iteration, and removing non-negative cycles does not increase the path length, we can run the KSP algorithm to get a set $AP_K$, then remove cycles in the paths that do not satisfy the requirement repeatedly until all paths in $AP_K$ are meet the requirements.
\end{proof}

\begin{theorem}\label{Theorem 2}
	In the augmenting pathfinding process, if the path's endpoint is at layer-3 node $j$, then the condition to transfer via layer-2 node $i$ to another layer-3 node $k$ is that MEC server $j$ stores tile$_i$ and MEC server $k$ does not.
\end{theorem}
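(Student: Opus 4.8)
The plan is to reduce the statement to a single invariant about the orientation of the tile-to-MEC edges and then read the transfer condition off directly. First I would establish the correspondence between the orientation of each such edge and the present caching state. Recall that choosing a simple directed path $<s, i, j, t>$ corresponds to caching tile$_i$ on MEC server $j$, and that each iteration of the KSP algorithm reverses the direction (and negates the weight) of every edge along the chosen path. Since the unique edge joining tile node $i$ and MEC node $j$ is the single (non-parallel) edge $e_{i,j}$, initially oriented from $i$ to $j$, I claim that after any number of iterations this edge is oriented $j \to i$ precisely when tile$_i$ is currently cached on MEC server $j$, and is oriented $i \to j$ otherwise. In other words, the orientation of $e_{i,j}$ is a one-bit record of whether the assignment $(i,j)$ is active in the current shortest-path set $P_l$: caching the tile on $j$ flips the edge, and a later augmenting path that removes it flips the edge back.

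Granting this invariant, the transfer condition follows immediately. Suppose the augmenting-path search has reached layer-3 node $j$ and seeks to continue through layer-2 node $i$ to a distinct layer-3 node $k$. It must first traverse the edge between $j$ and $i$ in the direction $j \to i$, which by the invariant is available exactly when tile$_i$ is cached on MEC server $j$; it must then traverse the edge between $i$ and $k$ in the direction $i \to k$, which is the original orientation of $e_{i,k}$ and is therefore available exactly when tile$_i$ is \emph{not} cached on MEC server $k$. Conjoining the two requirements yields precisely the stated condition, and the converse holds as well, since whenever tile$_i$ is cached on $j$ but not on $k$ both arcs are present and the transfer $j \to i \to k$ is permitted.

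The step I expect to carry the real content is justifying the orientation invariant, because a given edge $e_{i,j}$ may be reversed in one iteration and restored in a later one, so one must argue that its orientation at any moment depends only on the current caching state and not on the history of augmentations. This is exactly where the single (unit-capacity) nature of each tile-to-MEC arc and the edge-disjointness maintained by the KSP algorithm are essential: because the assignment $(i,j)$ is carried by one arc only, its orientation toggles in lockstep with whether tile$_i$ is cached on $j$, leaving no ambiguity of the kind that parallel edges would introduce. Finally, Theorem~1 lets me restrict attention to augmenting paths in which every layer-3 node appears at most once, so the local rule $j \to i \to k$ with $k \neq j$ captures every MEC-to-MEC transition the search can make and no revisiting of MEC nodes needs to be considered.
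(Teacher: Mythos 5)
Your argument is correct. Note, though, that the paper states Theorem~2 with no proof at all (it is immediately followed by the definition of \emph{transfer} and \emph{transfer loss}), so there is no written argument to compare against; what you have supplied is the implicit justification the authors rely on. Your central observation --- that each tile-to-MEC arc $e_{i,j}$ is a single, non-parallel edge whose orientation is flipped exactly once per traversal by an augmenting path, so that its current orientation ($j \to i$ versus $i \to j$) is a faithful one-bit record of whether the assignment ``tile$_i$ cached on server $j$'' belongs to the current partial solution $P_l$ --- is exactly the residual-graph invariant underlying Suurballe-style successive shortest paths, and reading the transfer condition off as ``arc $j \to i$ available iff $j$ stores tile$_i$'' conjoined with ``arc $i \to k$ available iff $k$ does not'' is sound. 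You are also right to flag that the unit-multiplicity of these middle-layer edges is what makes the invariant unambiguous (the parallel edges in the graph occur only at $s \to$ tile and MEC $\to t$), and the appeal to Theorem~1 to justify restricting to walks that do not revisit layer-3 nodes is consistent with how the paper uses the result in the proof of Theorem~3. The only caveat worth making explicit is that ``stores'' here must be read as ``is assigned in the current iterate of the algorithm,'' not the physically deployed cache; with that reading your proof is complete.
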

We refer to the process in Theorem \ref{Theorem 2} as a \textit{transfer}, it results in a \textit{transfer loss} of $v_{i,j}^{loc} - v_{i,k}^{loc}$, and we denote $i$ as a \textit{transit node}.

\begin{theorem}
	For any $ap \in AP^\ast_K$, the edge number of $ap$ does not exceed $2 \cdot M + 1$.
\end{theorem}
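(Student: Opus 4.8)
The plan is to pin down the exact shape of any augmenting path $ap \in AP^\ast_K$ and then simply count its edges. First I would record which arcs are available out of each type of node. Out of the source $s$ the only outgoing arcs are the forward arcs $s \to i$ into tile nodes (the reversed copies $i \to s$ point \emph{into} $s$), so every augmenting path must begin with a forward edge $s \to i_1$. Out of a tile node $i$ the outgoing arcs are the forward arcs $i \to j$ into MEC nodes together with the single reversed arc $i \to s$; since Theorem~1 forbids revisiting $s$, the walk is forced out of a tile node by a forward arc into some MEC node. Out of an MEC node $j$ the outgoing arcs are the forward arc $j \to t$ (which terminates the path) and the reversed arcs $j \to i$ to exactly those tile nodes $i$ stored by $j$; by Theorem~\ref{Theorem 2}, taking such a reversed arc and then a forward arc $i \to k$ is precisely a \emph{transfer} through the transit node $i$ to a new MEC node $k$. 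Finally, $t$ can only be entered through a forward arc $j \to t$.

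Combining these observations, every augmenting path has the alternating form
\[ s \to i_1 \to j_1 \to i_2 \to j_2 \to \cdots \to i_m \to j_m \to t, \]
where the $i_\ell$ are tile nodes and the $j_\ell$ are MEC nodes: after reaching each MEC node the walk either stops at $t$ or performs exactly one transfer, which reinserts one tile node followed by one fresh MEC node. The second step is to bound $m$. Theorem~1 guarantees that in $ap$ every MEC node appears at most once, so $j_1,\dots,j_m$ are distinct; since the collaboration domain contains only $M$ MEC servers, $m \le M$. (The tile nodes $i_\ell$ may repeat, which is exactly why Theorem~1 exempts layer-2 nodes, but this does not affect the count.) The displayed path visits $2m+2$ nodes, namely $s$, $t$, and $m$ tile--MEC pairs, and therefore has $2m+1$ edges; substituting $m \le M$ yields at most $2M+1$ edges, as claimed.

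The main obstacle will be making the first step airtight, that is, rigorously excluding every edge transition other than the alternating pattern. In particular I must argue that within a single augmenting-path search no arc is simultaneously present in both orientations: between a tile node and an MEC node there is a \emph{unique} arc $e_{i,j}$, and reversals occur only between iterations, so an immediate backtrack $i \to j \to i$ cannot arise and a forward step from a tile node always lands on a genuinely new MEC node in the transfer sense of Theorem~\ref{Theorem 2}. I must also invoke Theorem~1 at precisely the right junctures to forbid returning to $s$, to $t$, or to an already-used MEC node. Once this structural characterization is secured, the edge count is immediate and the bound $2M+1$ follows.
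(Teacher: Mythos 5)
Your proposal is correct and follows essentially the same route as the paper's own proof: both characterize the augmenting path as two initial edges reaching a first MEC node followed by a sequence of two-edge transfers, invoke Theorem 1 to conclude that the at most $M$ distinct MEC nodes bound the number of transfers, and then count $2M+1$ edges. Your version spells out the arc-availability case analysis more explicitly than the paper does, but the underlying argument is the same.
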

\begin{proof}
	Path $ap$ will reach a layer-3 node after the first two edges (node $s$ does not appear twice), and then it will \textit{transfer} between layer-3 nodes or reach sink node $t$. As each MEC node in $ap$ does not appear twice, initially one layer-3 node has been visited, and the number of visited layer-3 nodes increases by one after each \textit{transfer}. With $M$ layer-3 nodes, the \textit{transfer} time does not exceed $M - 1$, and the edge number after each \textit{transfer} increases by 2. As a result, the edge number in the $ap$ does not exceed $2 \cdot M + 1$ (the \textit{transfer} does not increase the depth, and $ap$ need 3 edges to reach the sink node $t$).
\end{proof}

Using the above theorem, we propose the OKSP algorithm, which is equivalent to KSP algorithm, but has lower time complexity in our model. As the number of tiles $N$ in the scenario of this paper is much larger than the number of MEC servers $M$, the following analysis assumes that $N > M$.\\

\textbf{Optimization 1.} Use ordered arrays to maintain the shortest distance from the source node $s$ to every layer-3 node. \\

Assume in the augmenting pathfinding process, the path's endpoint is $s$, and it needs to go forward two edges ($s \rightarrow i \rightarrow j$) to reach MEC node $j$. There are $2 \cdot N$ possible paths, and if we check every tile node, the time complexity is $O(N)$. However, the path will eventually select the two edges with the highest profits (minimum sum of edge weights).

It doesn't matter which the first two edges are chosen. The path's endpoint always becomes node $j$, node $j$ becomes a visited MEC node, and edge set $\{e_{from, to}^k | from = s\ or\ to = j\}$ can no longer be chosen. In this case, it's better to pick the two edges that will generate the maximum profit.

Initially, we add all $2 \cdot N$ possible paths from node $s$ to each MEC node $j$ into the array $PathArray_j$ and sort their profits in descending order. During the pathfinding process, whenever a transfer from node $s$ to MEC node $j$ occurs, we simply take the path with the highest profit from the $PathArray_j$ and dynamically remove infeasible solutions at each iteration. 
For details, see \textit{Initialize-Possible-Paths} and \textit{Maintain-Path-Array} in Algorithm \ref{optimized-skp}. \\

\textbf{Optimization 2.} Use priority queues to maintain the shortest distance between the layer-3 nodes.\\

Assume in the augmenting pathfinding process, the path's endpoint is MEC node $j$, and ready to \textit{transfer} to MEC node $k$. Multiple transit nodes may be available for \textit{transit}, and the time complexity will be $O(N)$ if all tile nodes need to be checked. But it can be proved that the path will eventually choose the \textit{transit node} with the minimum \textit{transfer loss}. 

Because no matter which \textit{transit node} is selected, the path'endpoint always becomes MEC node $k$,  MEC node $k$ becomes a visited MEC node, and edge set $\{e_{from, to}^k | from = j\ or\ to = k\}$ can no longer be chosen. In this case, it's better to pick the \textit{transit node} that will cause the minimum \textit{transfer loss}.

The priority queue $LossQueue_{j,k}$ can be used to dynamically maintain all feasible \textit{transit nodes} of MEC server $j$ to transfer to MEC server $k$. For details, see \textit{Maintain-Loss-Queues} in Algorithm \ref{optimized-skp}.

After the above optimization, the tile nodes of layer-2 are hidden, and the original graph $G$ can be transformed into a new graph $G'$ with the number of nodes $M+2$ in Fig.~\ref{ksp2-fig}.

\begin{figure}
	\centering
	\includegraphics[width=1\textwidth]{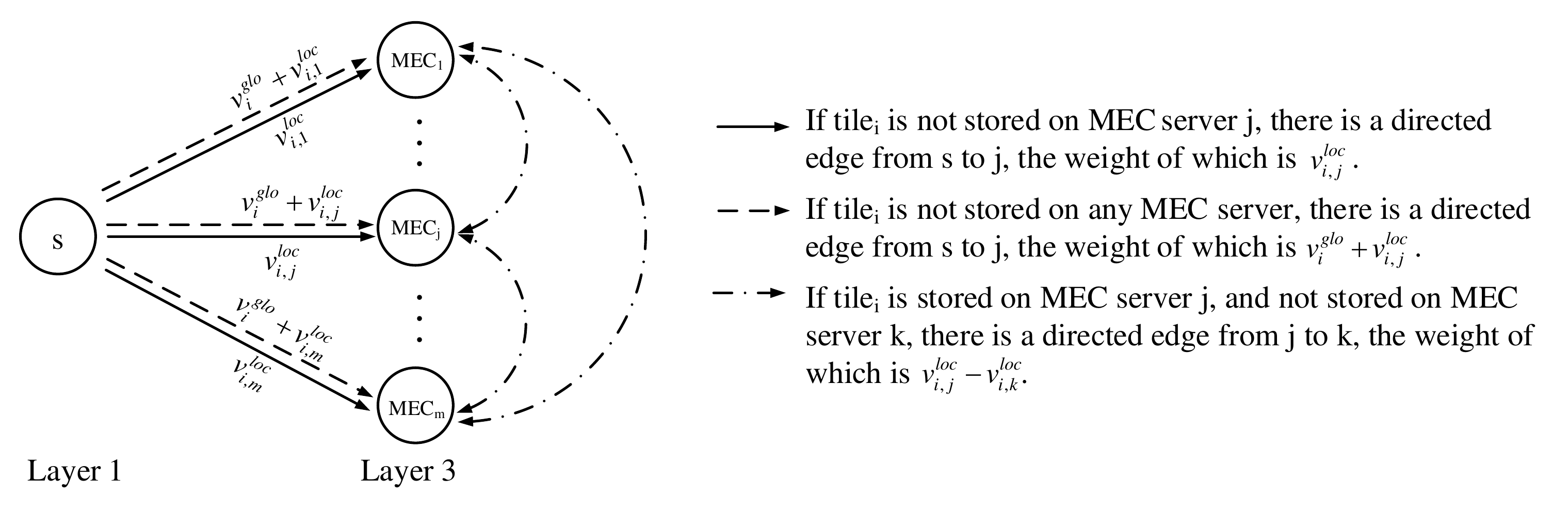}
	\caption{The illustration on optimizing the k-shortest problem. Note that the graph shows the possible cases of tile$_i$ profits in the collaborative domain.} \label{ksp2-fig}
\end{figure}

\begin{theorem}\label{Theorem 4}
	The total length of k shortest paths calculated by the OKSP algorithm is equal to the KSP algorithms.
\end{theorem}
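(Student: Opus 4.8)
The plan is to prove Theorem \ref{Theorem 4} by induction on the iteration count $l$, showing that the OKSP algorithm simulates the KSP algorithm step for step on the reduced graph $G'$, so that the two algorithms select augmenting paths of identical cost at every iteration. First I would fix the inductive invariant: after $l$ iterations, (i) the augmenting path $ap_l$ chosen by OKSP in $G'$ has the same cost as the one chosen by KSP in $G$, and (ii) the contents of the arrays $PathArray_j$ and the priority queues $LossQueue_{j,k}$ exactly encode the residual minimum-cost connections of $G$ — namely, for each MEC node $j$ the head of $PathArray_j$ gives the cheapest currently feasible entry segment $s \rightarrow i \rightarrow j$, and for each ordered pair $(j,k)$ the head of $LossQueue_{j,k}$ gives the minimum feasible transfer loss $v_{i,j}^{loc} - v_{i,k}^{loc}$ over transit nodes $i$. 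The base case $l=0$ holds by construction of $G'$ via \emph{Initialize-Possible-Paths} and the maintenance routines, which populate these structures with precisely the minimal segment costs of the freshly built $G$.

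For the inductive step I would invoke the three preceding structural theorems to pin down the shape of the shortest augmenting path in $G$. By Theorem 1 every augmenting path repeats only tile nodes; by Theorem \ref{Theorem 2} each hop between layer-3 nodes is a \emph{transfer} through a single transit node obeying the stated storage condition; and by Theorem 3 the path consists of an entry segment $s \rightarrow i \rightarrow j$, at most $M-1$ transfers, and a final zero-weight edge $j \rightarrow t$. Hence the cost of any such path decomposes additively into one entry-segment cost plus a sum of transfer losses. The crucial observation is that whenever the shortest-path search must traverse $s \rightarrow i \rightarrow j$ or \emph{transfer} from $j$ to $k$, it always selects the feasible choice of minimal incremental cost, since any other feasible choice reaches the identical vertex state while strictly increasing the accumulated cost and could therefore be improved, contradicting optimality. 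This is exactly the greedy selection that Optimization 1 and Optimization 2 realize by reading the heads of $PathArray_j$ and $LossQueue_{j,k}$, so the shortest augmenting path found in $G'$ matches the cost of the one KSP finds in $G$, giving invariant (i) after iteration $l$.

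To re-establish invariant (ii) I would verify that the data-structure maintenance in OKSP mirrors the edge reversal and weight negation that KSP performs on $G$. When $ap_l$ stores tile$_i$ on MEC $j$, the edges $e_{s,i}^{\bullet}$, $e_{i,j}$, $e_{j,t}^{\bullet}$ it uses are reversed and negated in $G$; in $G'$ this manifests as deleting the now-infeasible entry from $PathArray_j$, inserting the new transit possibilities through $i$ into the relevant $LossQueue_{j,k}$, and retiring choices that violate the transfer condition of Theorem \ref{Theorem 2}. I would check case by case that every reversal in $G$ corresponds to exactly one insertion or deletion in the arrays and queues and conversely, so that the heads of the maintained structures continue to report the true residual minima. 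Because $G$ contains no negative cycles after each iteration, the corresponding shortest-path computations in $G'$ stay well defined, and the simulation does not break down.

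Finally I would close the induction by summing costs: since $ap_l$ agrees in cost across the two algorithms for every $l \in \{1,\dots,K\}$, and the total length of the $K$ shortest paths equals the sum of the augmenting-path costs over all $K$ iterations in both algorithms, the two totals coincide. The main obstacle will be the bookkeeping in invariant (ii): proving that the dynamic insertions and deletions in the arrays and priority queues are in exact correspondence with the edge reversals of $G$, and in particular keeping the feasibility condition of Theorem \ref{Theorem 2} (that $j$ stores tile$_i$ while $k$ does not) synchronized across all ordered pairs of MEC nodes as transit nodes become used. Making this correspondence airtight, rather than the greedy-choice argument itself, is where the real work lies.
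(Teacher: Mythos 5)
Your proposal is correct and follows essentially the same route as the paper: the paper's (much terser) proof likewise argues that Optimizations 1 and 2 greedily pick the minimum-cost entry segment and minimum transfer loss while $PathArray$ and $LossQueue$ "cover all available options," so each augmenting path in $G'$ has the same cost as the corresponding one in $G$. Your explicit induction with the residual-correspondence invariant is a more careful write-up of the same exchange argument; the bookkeeping step you flag as the real work is exactly what the paper's one-paragraph proof leaves implicit.
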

\begin{proof}
	Assume that the augmenting path $ap_l$ obtained by $G'$ in $l^{th}$ iteration is ($s \rightarrow$ MEC$_1 \rightarrow  ... \rightarrow$ MEC$_j \rightarrow ... t$). From Optimization 1 and 2, we know that $s \rightarrow$ MEC$_1$ will greedily choose the path with the maximum profit, and MEC$_j \rightarrow$ MEC$_{j+1}$ will choose the transit node with the minimum \textit{transfer loss}, and the algorithm covers all available options when maintaining $PathArray$ and $LossQueue$, and the length of all shortest paths equals the length of all augmenting paths. Therefore, Theorem \ref{Theorem 4} holds.
\end{proof}

The following is an analysis of time complexity. The primary time consumption of the OKSP algorithm is as follows. (1) Calculating the augmenting path. There are K iterations, each time using the Dijkstra algorithm. The time complexity is $O(K \cdot M^2)$. (2) Maintaining MECPathArray. PathArray needs to sort the profits of all possible cache choices for each MEC node, and the time complexity is $O(M \cdot N \cdot \log N)$. (3) Maintaining MECLossQueue. Since the edge number of augmenting path obtained from each iteration will not exceed $O(M)$, the add operation will not exceed $O(K \cdot M^2)$ times. The time complexity of a priority queue add operation is $O(\log N)$; the remove operation times will not exceed the add operation and can be omitted.

Therefore, the total time complexity is $O((K\cdot M + N) \cdot M \cdot \log N)$. For random input, the edge number of augmenting path of each iteration is much smaller than $M$, and the run time will be faster than expected.

In the MEC collaboration domain, the collaboration system periodically runs the OKSP algorithm, selecting the least loaded MEC server in the collaboration domain to compute the caching scheme and distributing it to the other MEC servers. As soon as other MEC servers receive the cache scheme, they will request the tile files they do not have and need to cache.

\section{Experiment}
\label{sec:Experiment}
In order to evaluate the OKSP algorithm's low complexity and applicability to large-scale input data, we conducted numerical simulation experiments. In addition, we demonstrate that the OKSP algorithm can compute the caching scheme with the lowest average request latency in all cases by comparing it with other cooperative caching algorithms.

For the parameters, we set their values based on the fundamental physical situation and concerning similar studies.
We assume all MEC servers in the collaboration domain have the same storage space.
The VR device bandwidth is [50,100]Mbps, the MEC server bandwidth is 500 Mbps, and the cloud server bandwidth is 1 Gbps.

The tile file size is 10MB. Assume that a remediation request requests 10\% $\sim$ 20\% of a tile's size, and a prefetch request requests 50\% $\sim$ 100\% of a tile's size. 
As in most studies \cite{borst_distributed_2010,wang_zone-based_2019}, we assume the popularity of tiles follows a Zipf distribution and the shape parameter $\alpha$ of the Zipf distribution is 1.5. Generally, the greater the shape parameter $\alpha$, the more concentrated the popular content.
We assume the queuing delay $T_{q1} $ during communication is 1ms and $T_{q2}$ is 2ms. 
The packet transmission delay $T_{m\leftrightarrow dc}$ on the link between the VR device and the cloud server is [50,100]ms.

In the collaboration domain, a cache hit results in a lower request latency than a cloud response. 
The request latency reduction is called request latency optimization. And all request latency optimizations for the same tile equal the tile's caching profit. 
By reducing request latency, we aim to improve users' experience. Thus, we use the average request latency (ARL) optimization in the collaboration domain as a metric for evaluating the algorithm. We compare the OKSP algorithm with the following caching algorithms.
\begin{itemize}
	\item[-] \textit{Distributed}: Each MEC server in the collaboration domain has completely different tiles to achieve as many tiles as possible in the collaboration domain. The time complexity of the Distribution algorithm is $O(M \cdot N \cdot logN)$.
	
	\item[-] \textit{Self-Top}: Each MEC server in the collaboration domain caches the most popular tiles within its service area. The time complexity of the Self-Top algorithm is $O(M \cdot N \cdot logN)$.
	
	\item[-] \textit{MixCo}: In MixCo \cite{wang_zone-based_2019}, the storage space of each MEC server is divided into sub-zone dedicated storage space and zone-shared storage space. In the implementation of MixCo, after each dedicated space is allocated to a MEC server, an algorithm with time complexity $O(M \cdot N)$ is performed to calculate the average request latency in the current storage state of the collaboration domain. The worst time complexity of the MixCo algorithm is $O(L\cdot M^2 \cdot N)$.
\end{itemize} 
In the above time complexity representation, $L$ is the maximum number of tiles cached by a single MEC server, $M$ is the number of MEC servers, and $N$ is the total number of different tiles.

\subsection{Run time}
\begin{figure}
	\begin{minipage}[t]{0.3\linewidth}
		\centering
		\includegraphics[width=1.5in]{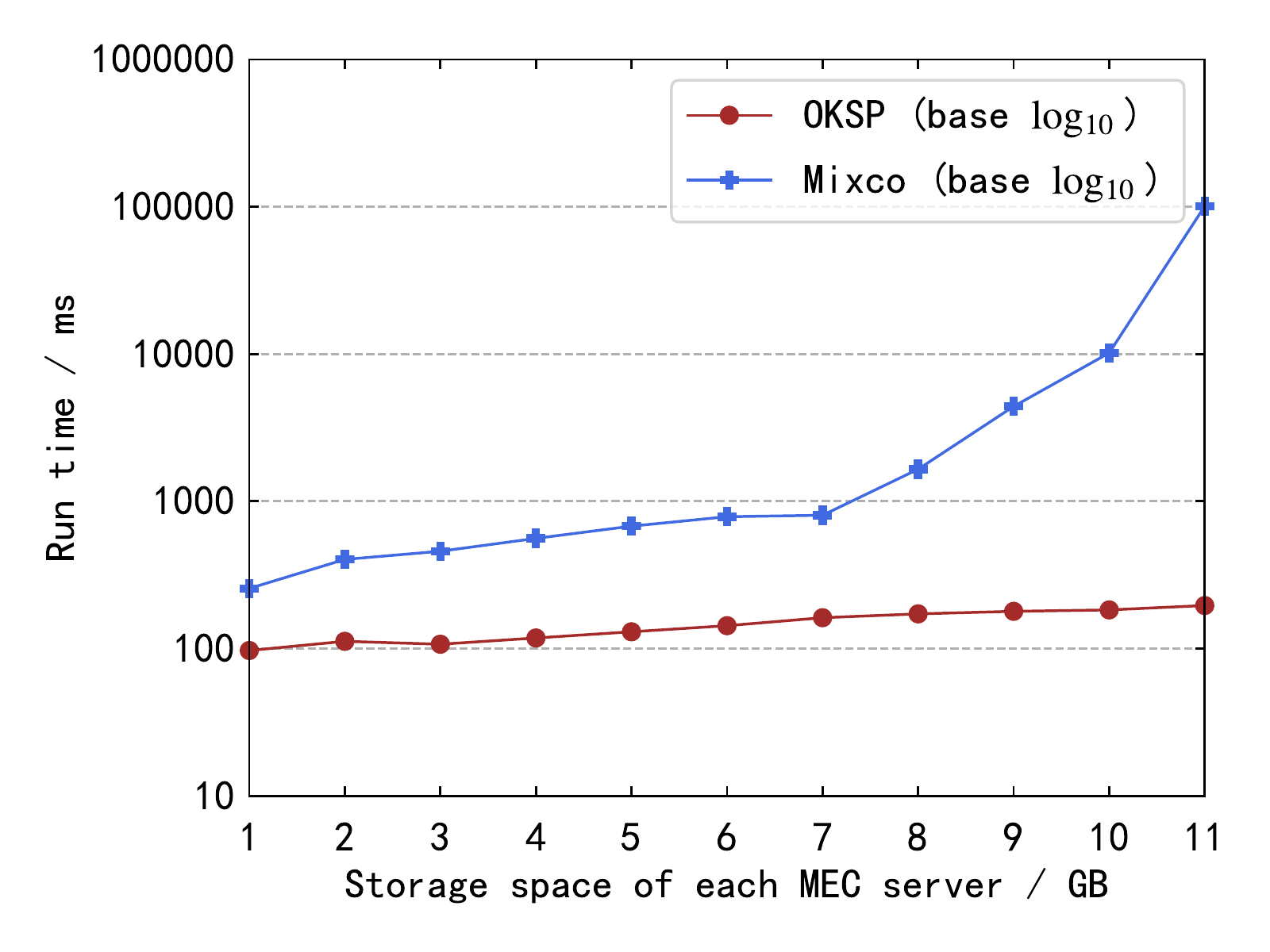}
		\caption{Run time comparison of the OKSP and Mixco algorithms for different storage space}
		\label{runtime-oksp-mixco}
	\end{minipage}
	\hspace{.15in}
	\begin{minipage}[t]{0.3\linewidth}
		\centering
		\includegraphics[width=1.5in]{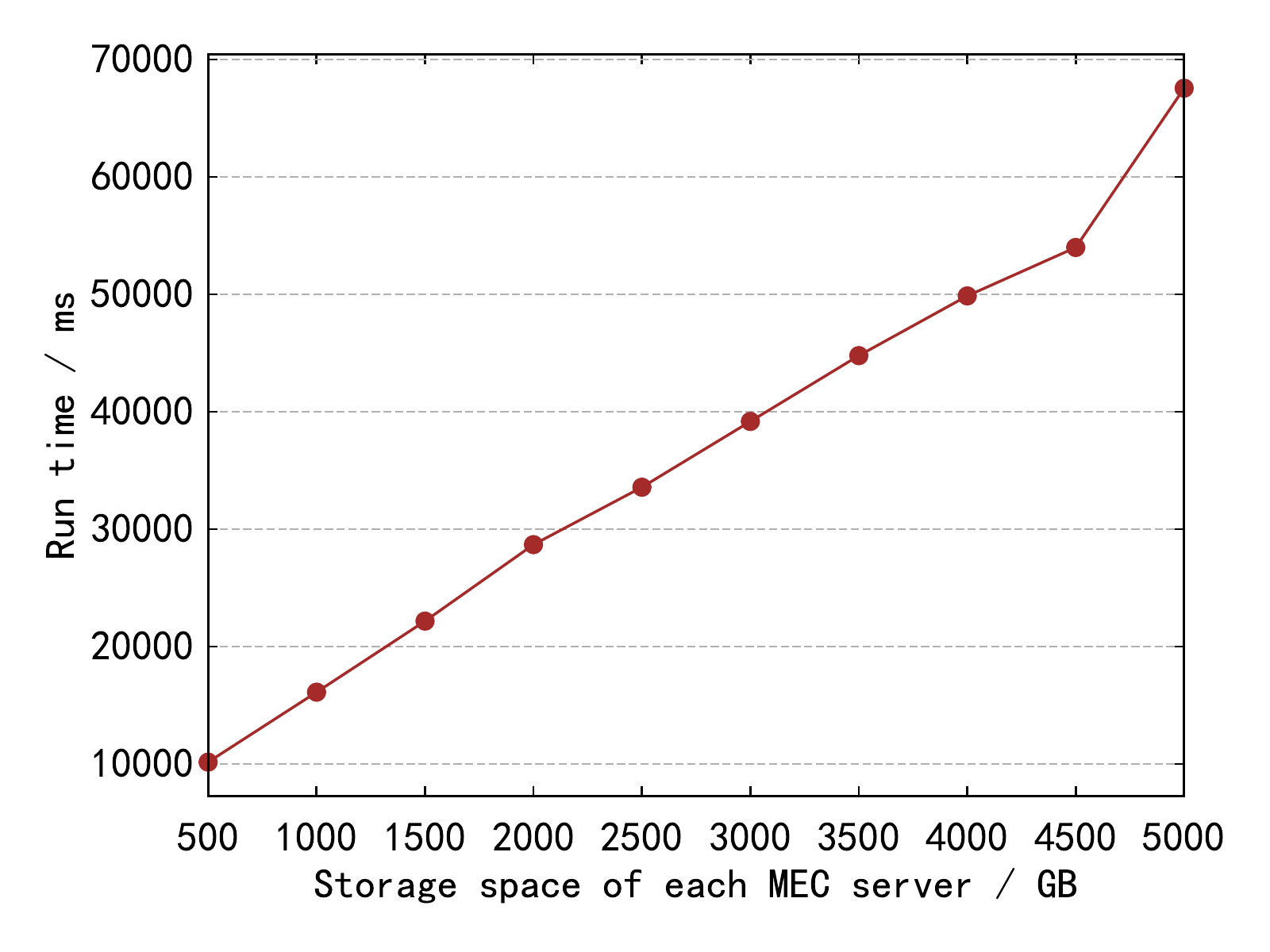}
		\caption{Run time of the OKSP algorithm for different storage space}
		\label{runtime-oksp-cm}
	\end{minipage}
\hspace{.15in}
\begin{minipage}[t]{0.3\linewidth}
	\centering
	\includegraphics[width=1.5in]{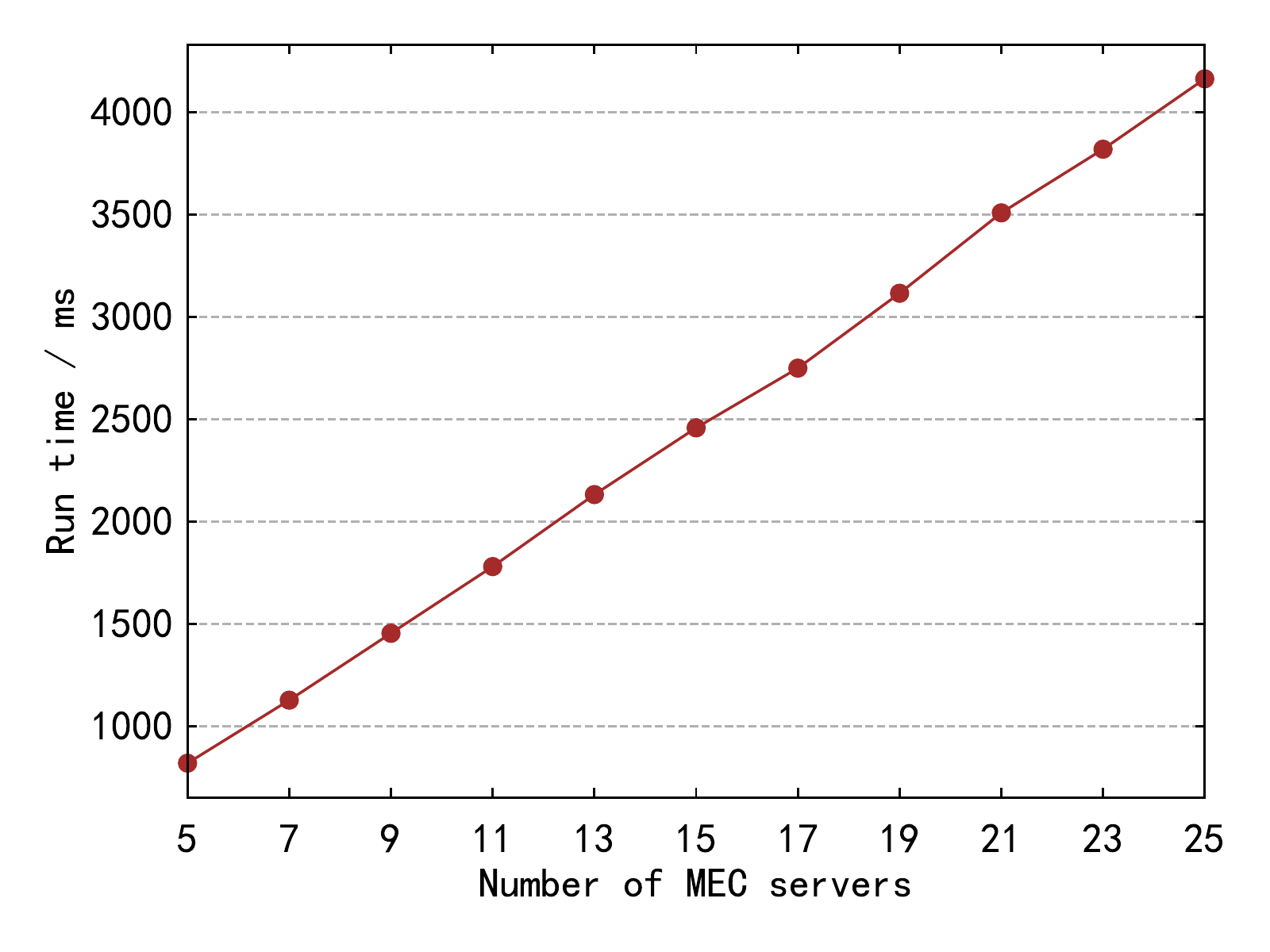}
	\caption{Run time of the OKSP algorithm for different number of MEC servers}
	\label{runtime-oksp-m}
\end{minipage}
\end{figure}
In this section, we will evaluate the speed of our OKSP algorithm. From the time complexity representation, the Self-Top and Distributed algorithms are faster than the OKSP algorithm. Thus, we do not take them into account. We perform a speed comparison experiment of the  OKSP and MixCo algorithms. 

This experiment fixes the number of tiles and MEC servers at 10,000 and 10, respectively, and adjusts the storage space of MEC servers.
The results are shown in Fig.~\ref{runtime-oksp-mixco}. Run time for the OKSP algorithm tends to increase linearly with increasing storage space, whereas the run time for the Mixco algorithm tends to grow exponentially. Due to more storage space available on MEC servers, MixCo generates multiple dedicated space partitioning iterations with time complexity $O(M \cdot N)$, which hinders algorithm performance. Therefore, applying the MixCo algorithm in a practical large-scale data environment is unrealistic.

To probe the speed limit of the OKSP algorithm, we further increase the data size. We fix the number of tiles and the number of MEC servers as 500,000 and 10, respectively, and increase the storage space of each MEC server. Fig.~\ref{runtime-oksp-cm} shows the results. The speed of OKSP increases linearly as storage space increases, which is consistent with the effect of storage space on its speed intuitively based on time complexity. 

In addition, we continued to experiment with the effect of MEC servers numbers variation on OKSP speed, and the results are shown in Fig.~\ref{runtime-oksp-m}. 
200,000 tiles and 2TB of storage space on each MEC server are fixed in this experiment. Due to the small order of magnitude of the MEC servers, the results indicate linear effects on OKSP speed. Generally, the OKSP algorithm is suitable for large-scale data problems.

\subsection{Performance Comparison}
In the following experiments, we specifically study the effects of the MEC storage space, the number of tiles and the shape parameter $\alpha$ of Zipf distribution on the performance of the above four caching algorithms. 
The MixCo algorithm does not apply to large-scale data, so we first test the performance of the four algorithms with smaller-scale data, then remove the MixCo algorithm and test the remaining three algorithms with large-scale data. The following results are the average of three simulation runs.
\begin{figure}
	\begin{minipage}[t]{0.3\linewidth}
		\centering
		\includegraphics[width=1.5in]{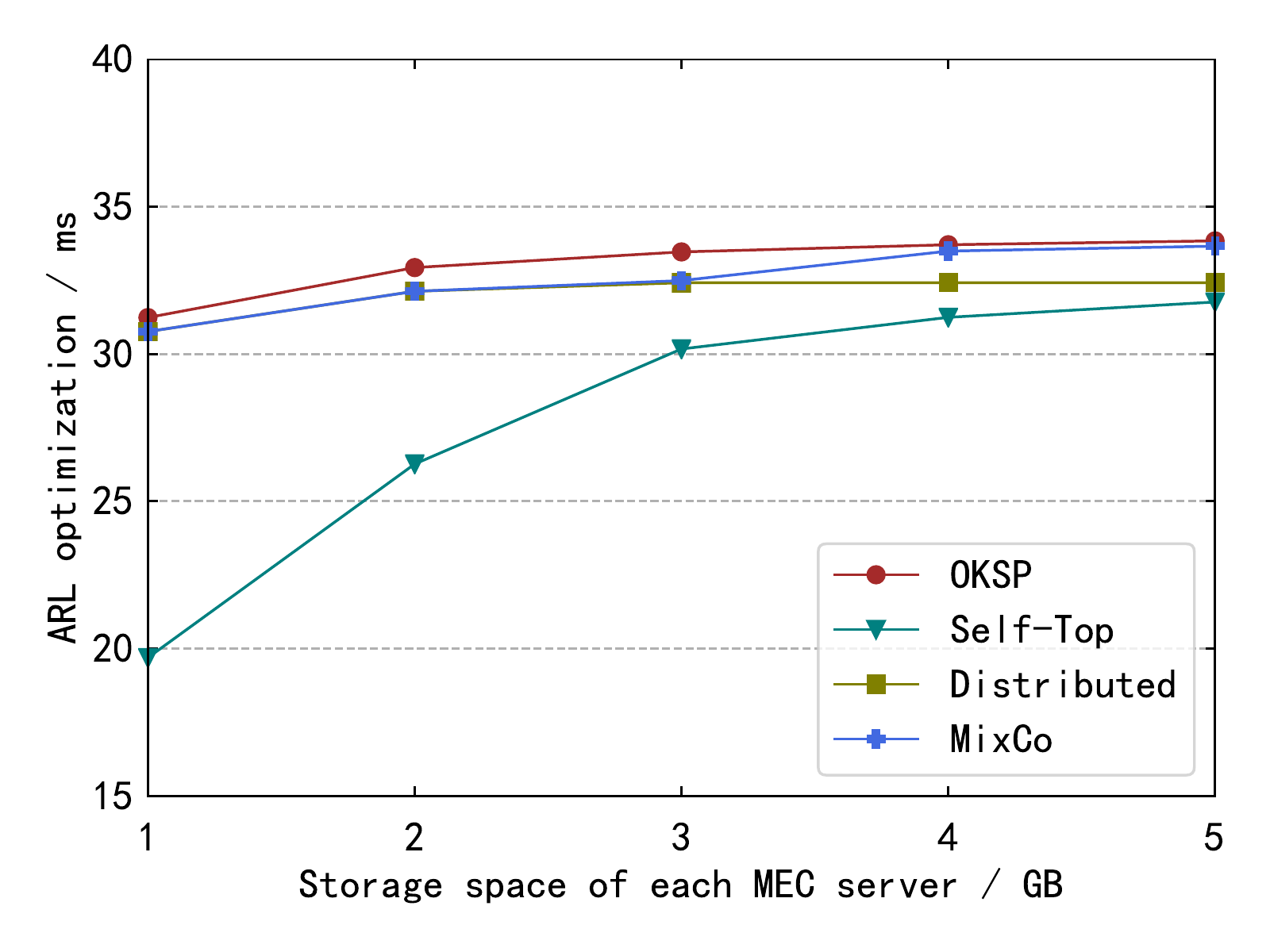}
		\caption{Performance comparison for different storage space}
		\label{pc-4-cm}
	\end{minipage}
	\hspace{.15in}
	\begin{minipage}[t]{0.3\linewidth}
		\centering
		\includegraphics[width=1.5in]{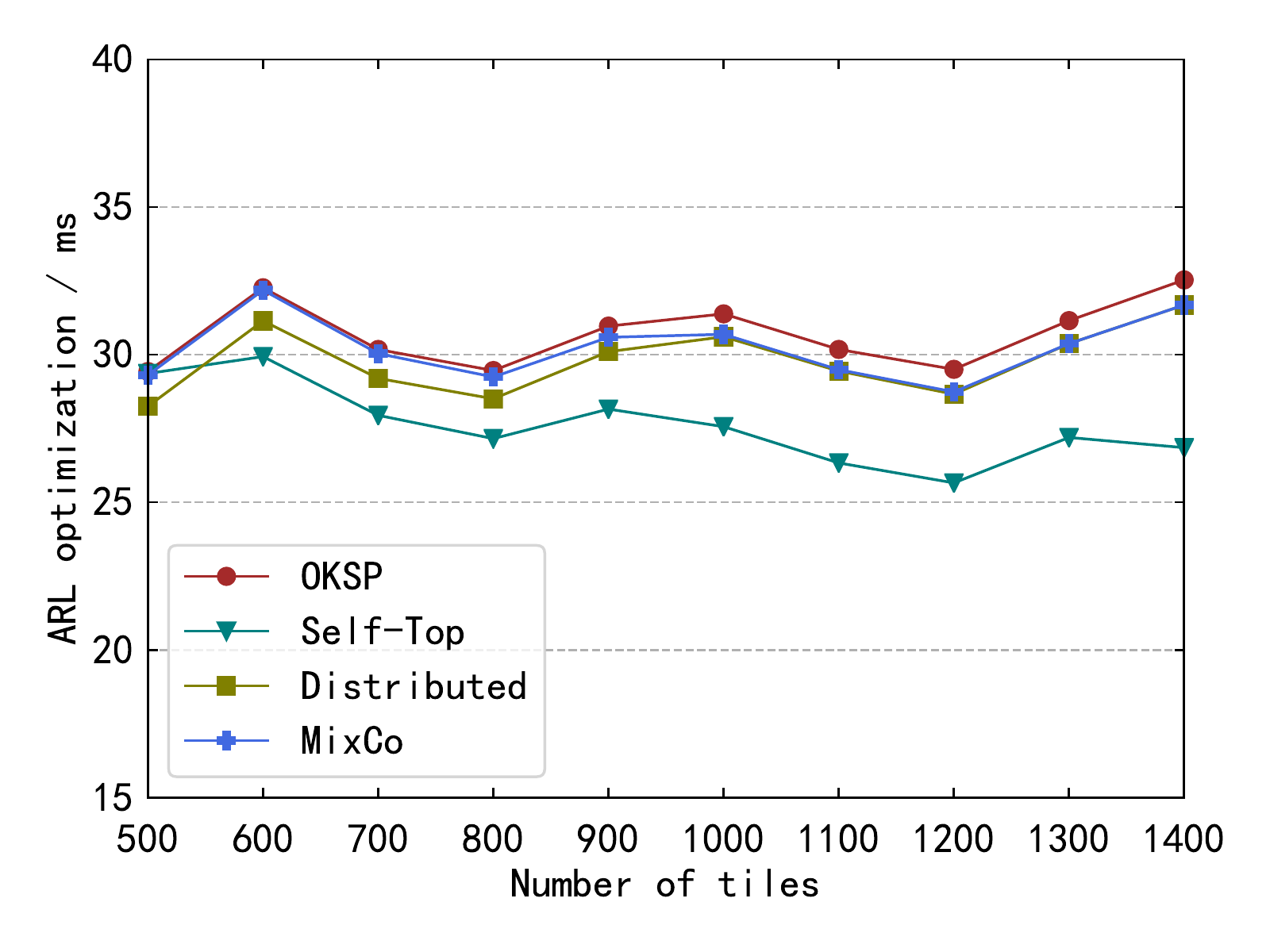}
		\caption{Performance comparison for different number of tiles}
		\label{pc-4-n}
	\end{minipage}
	\hspace{.15in}
	\begin{minipage}[t]{0.3\linewidth}
		\centering
		\includegraphics[width=1.5in]{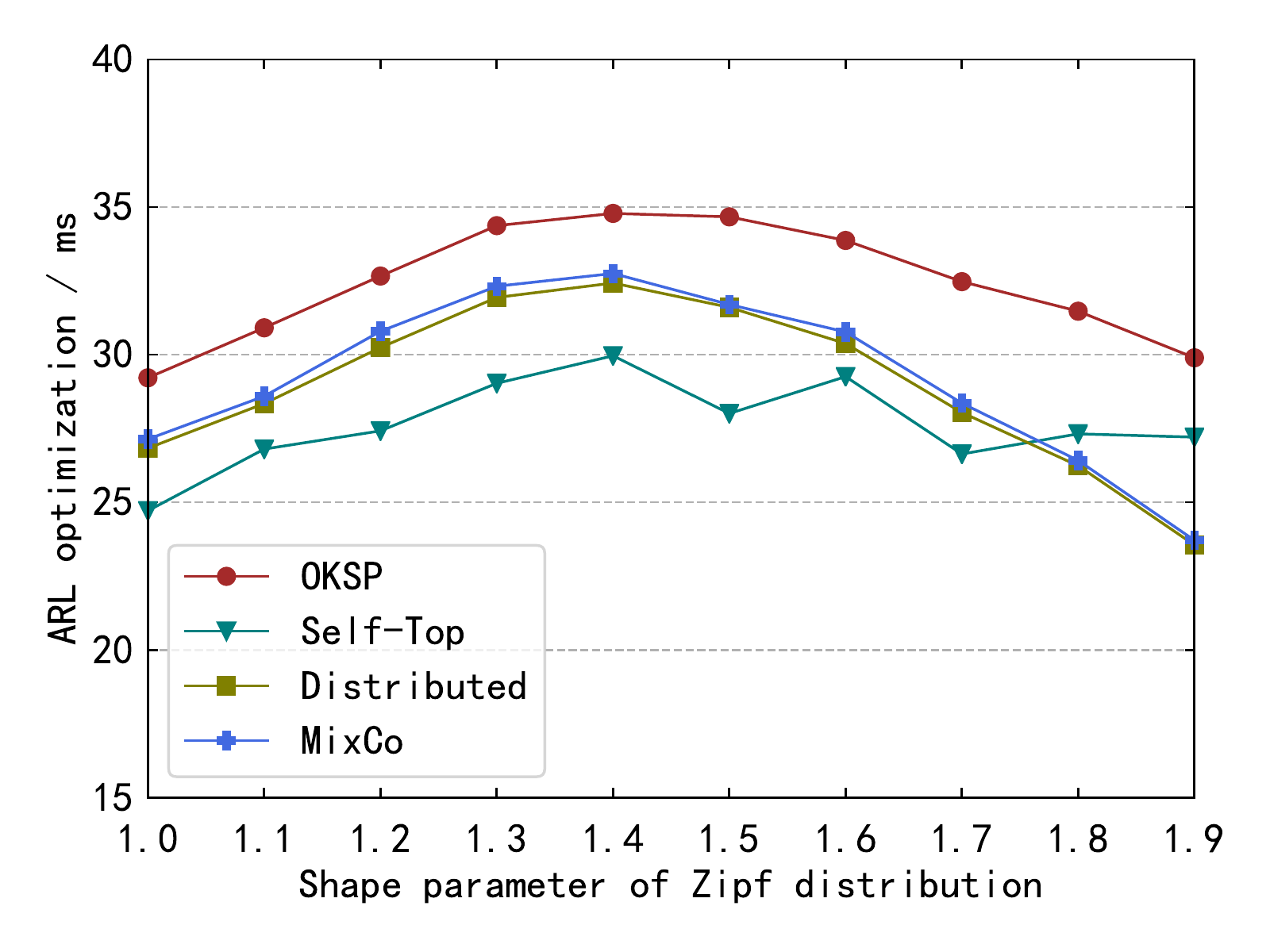}
		\caption{Performance comparison for different shape parameter $\alpha$}
		\label{pc-4-zpif}
	\end{minipage}
\end{figure}

First, we compare the ARL optimization when the storage space of the MEC server varies. We fix the number of MEC servers and the number of tiles as 5 and 1500, respectively. We increase the storage space of the MEC server from 1 GB to 5 GB. 
Fig.~\ref{pc-4-cm} shows the results. The ARL optimization of all caching algorithms increases as storage space increases. Because by increasing the storage space, more tiles can be cached in the MEC collaboration domain, preventing more requests from being sent to the cloud. 

Fig.~\ref{pc-4-n} depicts the ARL optimization for the different numbers of tiles. This experiment assumes the MEC server space is 2 GB and the MEC server number is 5. Due to the scattering of popular tiles and the different popular tiles on each MEC server, the Self-Top algorithm performs poorly.

Fig.~\ref{pc-4-zpif} shows the effect of the shape parameter $\alpha$ on ARL optimization. We fix the number of MEC servers is 5, MEC server storage space is 2 GB, and the number of tiles is 1000. In Fig.~\ref{pc-4-zpif}, we can observe that the ARL optimization of the four algorithms increases at the beginning as $\alpha$ increases. However, the ARL optimization of the four algorithms decreases as $\alpha$ improves further.
Because as $\alpha$ increases, the popular tiles become more concentrated. That is to say, a small number of tiles occupy most of the request traffic. Thus, the collaboration domain caching of these tiles can bring good profit.
As we continue to increase $\alpha$, despite the overall latency savings, the number of requests per tile increases, eventually showing a decrease in the ARL optimization.

\begin{figure}
	\begin{minipage}[t]{0.3\linewidth}
		\centering
		\includegraphics[width=1.5in]{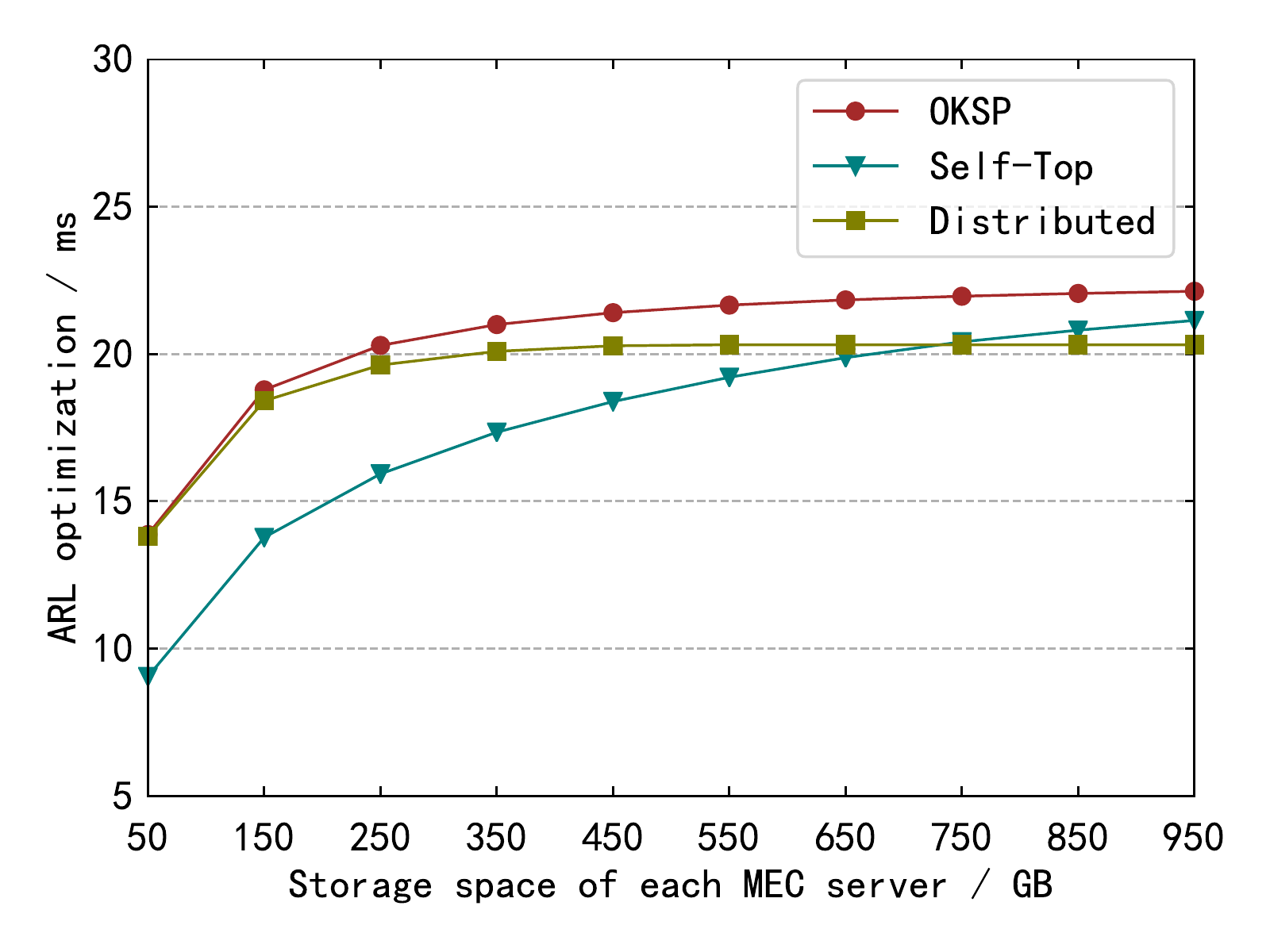}
		\caption{Performance comparison of the OKSP, Self-Top and Distributed algorithms for different MEC server storage space (popular tiles are similar on each MEC server)}
		\label{pc-3-cm-con}
	\end{minipage}
	\hspace{.15in}
	\begin{minipage}[t]{0.3\linewidth}
		\centering
		\includegraphics[width=1.5in]{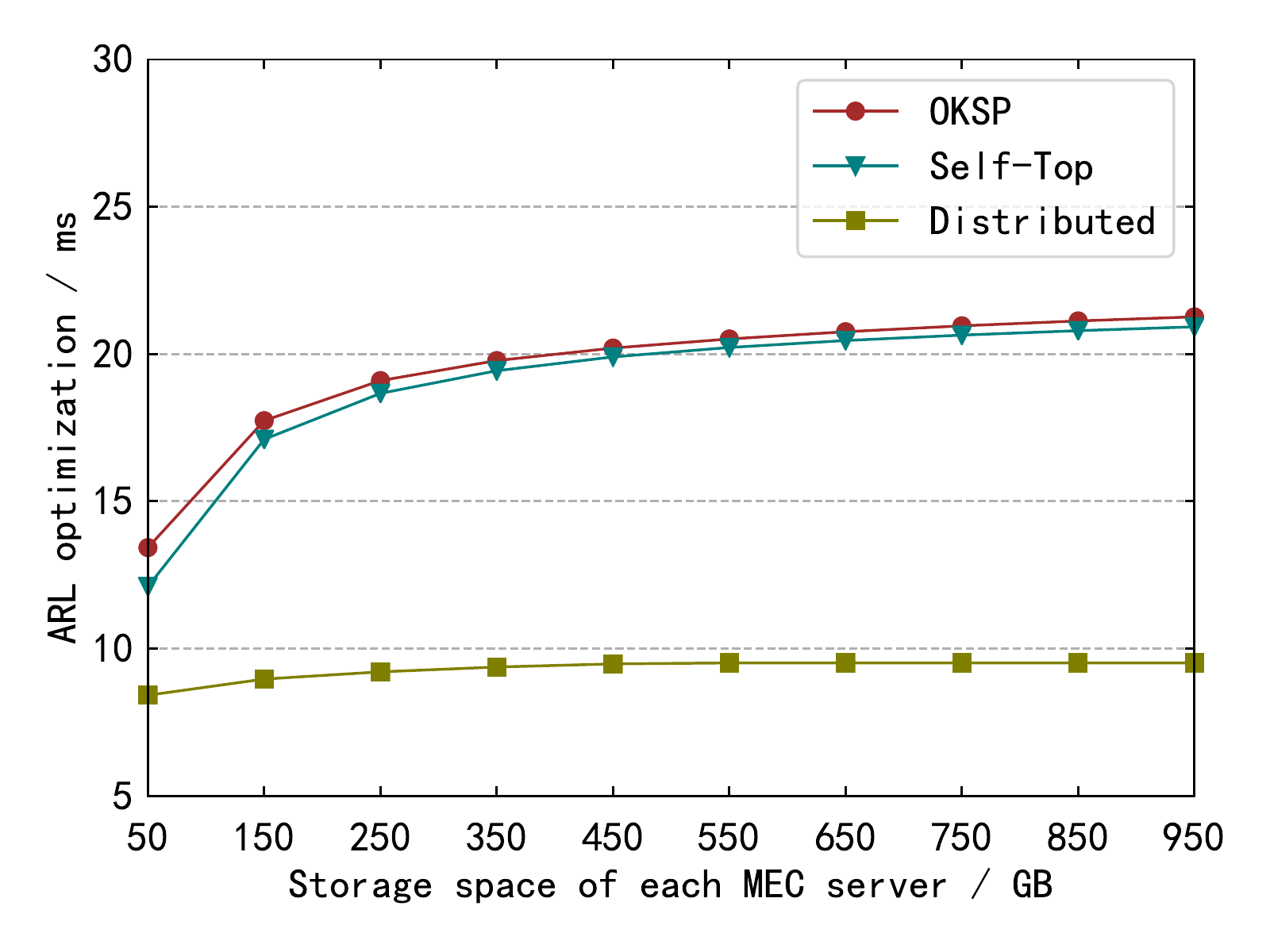}
		\caption{Performance comparison of the OKSP, Self-Top and Distributed algorithms for different MEC server storage space (popular tiles are different on each MEC server)}
		\label{pc-3-cm-sca}
	\end{minipage}
	\hspace{.15in}
	\begin{minipage}[t]{0.3\linewidth}
		\centering
		\includegraphics[width=1.5in]{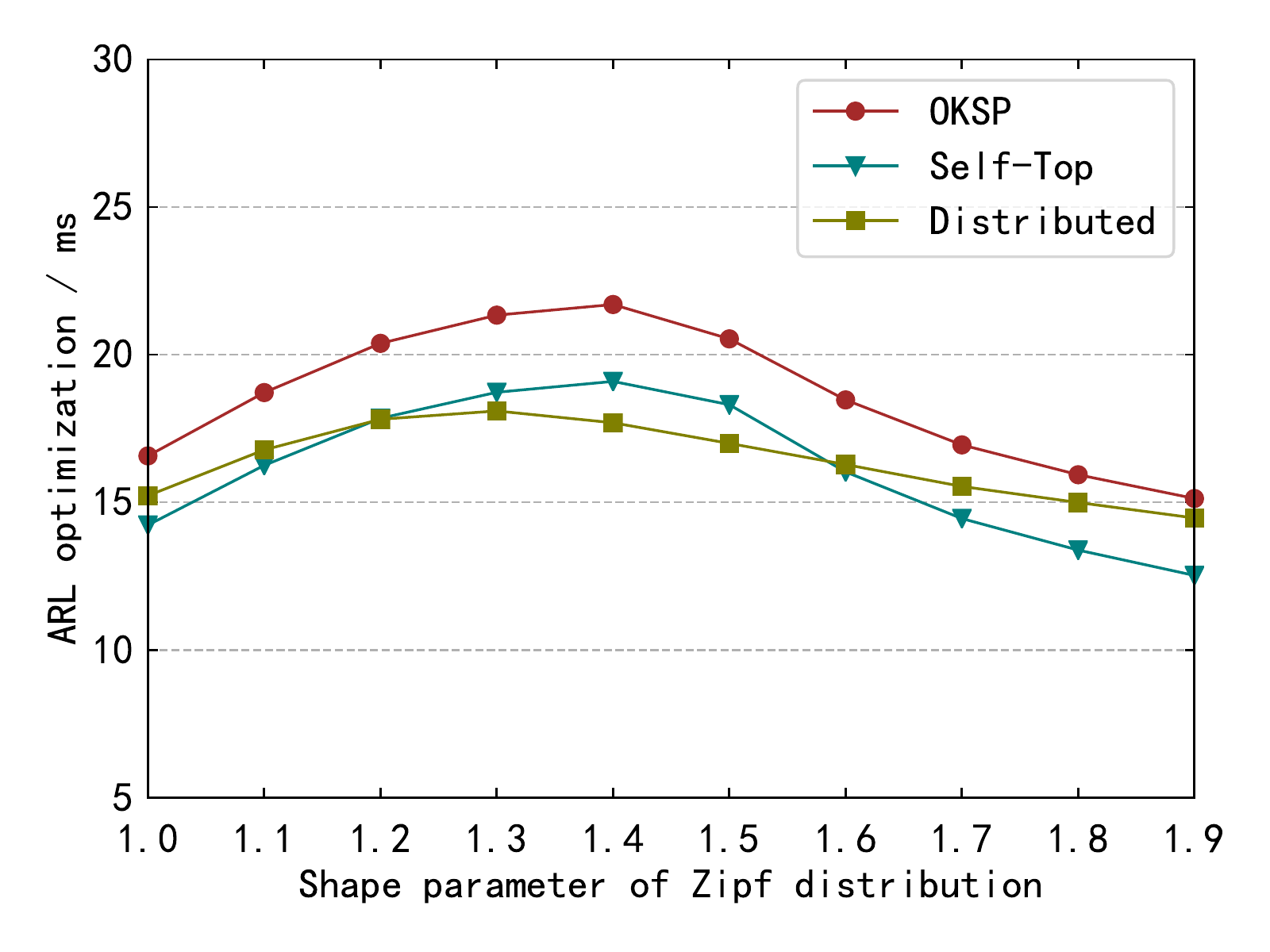}
		\caption{Performance comparison of the OKSP, Self-Top and Distributed algorithms for different shape parameter $\alpha$}
		\label{pc-3-zipf}
	\end{minipage}
\end{figure}

Finally, we test the ARL optimization of the remaining three algorithms on larger-scale data without Mixco. 
Fig.~\ref{pc-3-cm-con} and Fig.~\ref{pc-3-cm-sca} depict the ARL optimization under MEC server storage space variation. Assume there are 10 MEC servers and 500,000 tiles. In Fig.~\ref{pc-3-cm-con}, each MEC server's popularity is similar, while in Fig.~\ref{pc-3-cm-sca}, each MEC server's popularity is entirely random. 
Under different popularity distribution scenarios, the Self-Top and Distributed algorithms perform very differently. In contrast, the popularity distribution does not limit the performance of the OKSP algorithm, and it can keep steady regardless of whether the popular tiles are scattered or concentrated. 

ARL optimization under $\alpha$ variation is shown in Fig.~\ref{pc-3-zipf}. We assume the number of MEC servers is 10, MEC server storage space is 500 GB, and the number of tiles is 500,000. In Fig.~\ref{pc-3-zipf}, the results present a phenomenon similar to the experiment under small-scale data (in Fig.~\ref{pc-4-zpif}).
In general, the OKSP algorithm consistently performs better than the other algorithms.

\section{Conclusion}
\label{sec:Conclusion}
This paper provides a reliable transmission mechanism and an effective caching strategy under the current popular multi-MEC cooperative network architecture. 
We leverage MEC servers' computing and caching capabilities to improve users' viewing experience quality. 
For computing, we propose to let MEC servers handle prediction and remediation when predictions fail, while VR devices focus only on decoding and playing VR videos. For caching, VR videos are split into tile files, which improves cache hit rates.

To solve the order-of-magnitude huge tile caching problem, we first analyze the caching profit of tiles and then formulate the problem in graph theory as k-shortest paths problem. Then, we propose an OKSP algorithm to solve the reformulated problem quickly, which has an upper time complexity of $O((K \cdot M + N) \cdot M \cdot log N))$ and is suitable for shortest paths with restricted number of edges. Finally, we evaluate the proposed OKSP algorithm through run time and performance comparisons. Results show the OKSP algorithm is high-speed for solving large-scale data in our situation and has an average latency lower in any case than other caching schemes. 

As part of the model formulating process, we analyze the collaboration domain load in the average queueing delay. However, the actual situation is more complex. Introducing a reasonable model to describe the load on the MEC server and refining the model in this paper is the next step in our research plan.

\appendix
\section{Appendix - KSP Algorithm}
In this appendix, Algorithm \ref{orignal-skp} gives a pseudocode summary of the original k-shortest paths algorithm\cite{suurballe_disjoint_1974}, Algorithm \ref{optimized-skp} gives a pseudocode of the optimized k-shortest paths algorithm.
In algorithm \ref{optimized-skp}, there will be edges with negative weight in the graph after many iterations, we use the tips in the Johnson's algorithm \cite{johnson1977efficient} to re-weight all edges and make them all positive, then apply the Dijkstra algorithm to compute the shortest path trees at each iteration. We refer the interested reader to \cite{suurballe_disjoint_1974} for further details.

\begin{algorithm}
	\label{orignal-skp} 
	\Fn{Reverse($G, p$)} {
		\tcc{Reverse the direction of all edges in $p$, \;and negating their weights.}
		\For{$e_{i,j}^t \in p$} {
			Add $e_{j,i}^t$ in $G$\;
			$c(e_{j,i}^t) = -c(e_{i,j}^t)$
			Remove $e_{i,j}^t$ from $G$\;
		}
		\textbf{return} $G$
	}
	
	\caption{Original-K-Shortest-Paths}
	\Fn{Original-K-Shortest-Paths} {
		Construct the initial graph G described in 4.2\;
		\For{$l \gets 1$ to $K$}{
			$AugmentingPath_l \gets$ \textit{Shortest-Path-Algorithm}$(G, s, t)$\;
			\If{$AugmentingPath_l$ not exists} {
				Break
			}
			$G \gets$ Reverse($G, AugmentingPath_l$)
		}
		$P$ $\gets$ find the shortest paths in $G$ based on the negative edges\;
		\textbf{return} $P$
	}
	
\end{algorithm}

\begin{algorithm}
	\caption{Optimized-K-Shortest-Path Part-1}
	\label{optimized-skp}
	\LinesNumbered
	\Fn{Initialize-Possible-Paths}{
		\For{$j \gets 1$ to $M$}{
			\For{$i\gets 1$ to $N$}{
				\tcc{Two kinds of profits of tile$_i$ store in MEC server $j$.}
				Add $(v_{i,j}^{loc} + v_i^{glo}, i)$ to $PathArray_j$\;
				Add $(v_{i,j}^{loc}, i)$ to $PathArray_j$\;
			}
			Sort $PathArray_j$ in descending order
		}
	}
	
	\Fn{Maintain-Path-Array}{
		\For{$j \gets 1$ to $M$} {
			\tcc{Remove infeasible paths from $PathArray$.}
			\While{$PathArray_j$ is not empty} {
				$(profit, i) \gets$ the maximum profit element in $PathArray_ j$\;
				\If{tile$_i$ is stored in the collaboration domain} {
					\tcc{Check if the global profit of tile$_i$ has been accrued.}
					\If{$profit > v_{i,j}^{loc}$  \textbf{or} tile$_i$ is stored in MEC server $j$} {
						Remove $(profit, i)$ from $PathArray_j$\;
						Continue
					}
				}
				Break\;
			}
		}
	}
	\Fn{Setup-Graph(G')} {
	\tcc{Because the shortest path algorithm, negating the weights of all edges}
	Initialize All nodes in G' are unreachable to each other
	\For{$j \gets 1$ to $M$} {
		\If{$PathArray_j$ is not empty}{
			$(profit, i) \gets$ the maximum profit element in $PathArray_ j$\;
			$c(e_{s, j}) \gets -profit$
		}
		\If{MEC server j has storage space left} {
			$c(e_{j, t}) \gets 0$
		}
		\For{$k \gets 1$ to $k$ \textbf{and} $k \ne j$} {
			\If{$LossQueue_{j,k}$ is not empty} {
				$(loss, i) \gets$ the minimum loss element in $LossQueue_j,k$\;
				$c(e_{j, k}) \gets -loss$
			}
		}
	}
}
\end{algorithm}

\begin{algorithm}
	\LinesNumbered
	\setcounter{AlgoLine}{26}
	\Fn{Maintain-Loss-Queues(spath, CacheTable)} {
		\tcc{The edges of $spath$ are denoted as $(j \rightarrow k, i)$, indicating that MEC node $j$ is transferred to node $k$ via tile$_i$; In particular, the first edge of spath $(s \rightarrow sj, si)$ indicates that the MEC server $sj$ newly store the tile $si$, which is not transferred from another MEC; the last edge $(tj \rightarrow t, null)$ does not contain the transferred tile; and the remaining edges are considered \textit{transferred edges}.}
		Mark MEC server $sj$ is caching tile$si$ in $CacheTable$\;
		The number of copies of tile$_{si}$ cached in the collaboration domain += 1\;
		The remaining storage space of MEC server $tj$ -= 1\;
		\tcc{Adding the new transfer nodes.}
		\For{$(remove \rightarrow add, i) \in$ transferred edges}{
			Mark MEC server $add$ is caching tile$_i$ in $CacheTable$\;
			Mark MEC server $remove$ no longer cache tile$_i$ in $CacheTable$\;
			\For{$j \gets 1$ to $M$} {
				\If{$j \ne add$ \textbf{and} MEC j does not cache tile$_i$ } {
					Add $(v_{i,add}^{loc} - v_{i,k}^{loc}, j)$ to $LossQueue_{add, j}$
				}
				\If{$j \ne remove$ \textbf{and} MEC j is caching tile$_i$ } {
					Add $(v_{i,j}^{loc} - v_{i,remove}^{loc}, j)$ to $LossQueue_{j, remove}$
				}
			}
		}
		\tcc{Remove infeasible transfer nodes.}
		\For{$j \gets 1$ to $M$} {
			\For{$k \gets 1$ to $k$ \textbf{and} $k \ne j$} {
				\While{$LossQueue_{j,k}$ is not empty} {
					$(loss, i) \gets$ the minimum loss element in $LossQueue_j,k$\;
					\If{MEC server $j$ does not cache tile$_i$ \textbf{or} MEC server $k$ is caching tile$_i$} {
						Remove $(loss, i)$ from $LossQueue_j,k$\;
						Continue		
					}
					Break
				}
			}
		}
	}
	
	\Fn{Optimized-K-Shortest-Paths} {
		CacheTable $\gets$ a hash table for recording cache status\;
		Construct $G'$ containing $M$ MEC nodes, $s$ node, and $t$ node\; 
		Initialize-Possible-Paths()\;
		Setup-Graph($G'$)\;
		$p_1 \gets$ \textit{Flody-Algorithm}($G', s, t$)\;
		Re-Weight-All-Edges(the shortest path trees node $s$)
		\For{$l \gets 1$ to $K$}{
			Setup-Graph($G'$)\;
			$p_l \gets$ \textit{Dijkstra-Algorithm}($G', s, t$)\;
			\If{$p_l$ not exists} {
				Break
			}
			Re-Weight-All-Edges(the shortest path trees node $s$)
			MaintainLossQueues($p_l$, CacheTable)\;
		}
		\textbf{return} CacheTable
	}
\end{algorithm}

%
%
%

\clearpage
\bibliographystyle{splncs04}
\bibliography{samplepaper}

\begin{thebibliography}{10}
\providecommand{\url}[1]{\texttt{#1}}
\providecommand{\urlprefix}{URL }
\providecommand{\doi}[1]{https://doi.org/#1}

\bibitem{berclaz_multiple_2011}
Berclaz, J., Fleuret, F., Turetken, E., Fua, P.: Multiple {Object} {Tracking}
  {Using} {K}-{Shortest} {Paths} {Optimization}. IEEE Trans. Pattern Anal.
  Mach. Intell.  \textbf{33}(9),  1806--1819 (2011)

\bibitem{borst_distributed_2010}
Borst, S., Gupta, V., Walid, A.: Distributed {Caching} {Algorithms} for
  {Content} {Distribution} {Networks}. In: 2010 {Proceedings} {IEEE} {INFOCOM}.
  pp.~1--9 (2010), iSSN: 0743-166X

\bibitem{chakareski_vrar_2017}
Chakareski, J.: {VR}/{AR} {Immersive} {Communication}: {Caching}, {Edge}
  {Computing}, and {Transmission} {Trade}-{Offs}. In: Proceedings of the
  {Workshop} on {Virtual} {Reality} and {Augmented} {Reality} {Network}. pp.
  36--41. {VR}/{AR} {Network} '17, Association for Computing Machinery, New
  York, NY, USA (2017)

\bibitem{cheng2021design}
Cheng, Q., Shan, H., Zhuang, W., Yu, L., Zhang, Z., Quek, T.Q.: Design and
  analysis of mec-and proactive caching-based 360$^\circ$ mobile vr video
  streaming. IEEE Transactions on Multimedia  \textbf{24},  1529--1544 (2021)

\bibitem{fan_fixation_2017}
Fan, C.L., Lee, J., Lo, W.C., Huang, C.Y., Chen, K.T., Hsu, C.H.: Fixation
  {Prediction} for 360$^\circ$ {Video} {Streaming} in {Head}-{Mounted}
  {Virtual} {Reality}. In: Proceedings of the 27th {Workshop} on {Network} and
  {Operating} {Systems} {Support} for {Digital} {Audio} and {Video}. pp.
  67--72. {NOSSDAV}'17, Association for Computing Machinery, New York, NY, USA
  (2017)

\bibitem{filippov_video_2020}
Filippov, A., Norkin, A., Alvarez, J.R.: Video {Codec} {Requirements} and
  {Evaluation} {Methodology}. Request for {Comments} RFC 8761, Internet
  Engineering Task Force (2020). \doi{10.17487/RFC8761},
  \url{https://datatracker.ietf.org/doc/rfc8761}

\bibitem{gaddam_tiling_2016}
Gaddam, V.R., Riegler, M., Eg, R., Griwodz, C., Halvorsen, P.: Tiling in
  {Interactive} {Panoramic} {Video}: {Approaches} and {Evaluation}. IEEE
  Transactions on Multimedia  \textbf{18}(9),  1819--1831 (2016), conference
  Name: IEEE Transactions on Multimedia

\bibitem{ge_multipath_2017-1}
Ge, X., Pan, L., Li, Q., Mao, G., Tu, S.: Multipath {Cooperative}
  {Communications} {Networks} for {Augmented} and {Virtual} {Reality}
  {Transmission}. IEEE Transactions on Multimedia  \textbf{19}(10),  2345--2358
  (2017), conference Name: IEEE Transactions on Multimedia

\bibitem{ghosh2017rate}
Ghosh, A., Aggarwal, V., Qian, F.: A rate adaptation algorithm for tile-based
  360-degree video streaming. arXiv preprint arXiv:1704.08215  (2017)

\bibitem{guntur2012tile}
Guntur, R., Ooi, W.T.: On tile assignment for region-of-interest video
  streaming in a wireless lan. In: Proceedings of the 22nd international
  workshop on Network and Operating System Support for Digital Audio and Video.
  pp. 59--64 (2012)

\bibitem{hou_proactive_2018}
Hou, T., Feng, G., Qin, S., Jiang, W.: Proactive content caching by exploiting
  transfer learning for mobile edge computing. International Journal of
  Communication Systems  \textbf{31}(11),  e3706 (2018)

\bibitem{hou_predictive_2021}
Hou, X., Dey, S., Zhang, J., Budagavi, M.: Predictive {Adaptive} {Streaming} to
  {Enable} {Mobile} 360-{Degree} and {VR} {Experiences}. IEEE Transactions on
  Multimedia  \textbf{23},  716--731 (2021), conference Name: IEEE Transactions
  on Multimedia

\bibitem{hsu2020mec}
Hsu, C.H.: Mec-assisted fov-aware and qoe-driven adaptive 360 video streaming
  for virtual reality. In: 2020 16th International Conference on Mobility,
  Sensing and Networking (MSN). pp. 291--298. IEEE (2020)

\bibitem{hu2019optimal}
Hu, H., Xu, Z., Zhang, X., Guo, Z.: Optimal viewport-adaptive 360-degree video
  streaming against random head movement. In: ICC 2019-2019 IEEE International
  Conference on Communications (ICC). pp.~1--6. IEEE (2019)

\bibitem{jedari2020video}
Jedari, B., Premsankar, G., Illahi, G., Di~Francesco, M., Mehrabi, A.,
  Yl{\"a}-J{\"a}{\"a}ski, A.: Video caching, analytics, and delivery at the
  wireless edge: a survey and future directions. IEEE Communications Surveys \&
  Tutorials  \textbf{23}(1),  431--471 (2020)

\bibitem{johnson1977efficient}
Johnson, D.B.: Efficient algorithms for shortest paths in sparse networks.
  Journal of the ACM (JACM)  \textbf{24}(1),  1--13 (1977)

\bibitem{kimata2012mobile}
Kimata, H., Ochi, D., Kameda, A., Noto, H., Fukazawa, K., Kojima, A.: Mobile
  and multi-device interactive panorama video distribution system. In: The 1st
  IEEE Global Conference on Consumer Electronics 2012. pp. 574--578. IEEE
  (2012)

\bibitem{liu2021rendering}
Liu, Y., Liu, J., Argyriou, A., Wang, L., Xu, Z.: Rendering-aware vr video
  caching over multi-cell mec networks. IEEE Transactions on Vehicular
  Technology  \textbf{70}(3),  2728--2742 (2021)

\bibitem{mahzari_fov-aware_2018}
Mahzari, A., Taghavi~Nasrabadi, A., Samiei, A., Prakash, R.: {FoV}-{Aware}
  {Edge} {Caching} for {Adaptive} {360$^\circ$} {Video} {Streaming}. In:
  Proceedings of the 26th {ACM} international conference on {Multimedia}. pp.
  173--181. {MM} '18, Association for Computing Machinery, New York, NY, USA
  (2018)

\bibitem{mangiante2017vr}
Mangiante, S., Klas, G., Navon, A., GuanHua, Z., Ran, J., Silva, M.D.: Vr is on
  the edge: How to deliver 360 videos in mobile networks. In: Proceedings of
  the Workshop on Virtual Reality and Augmented Reality Network. pp. 30--35
  (2017)

\bibitem{mavlankar2010video}
Mavlankar, A., Girod, B.: Video streaming with interactive pan/tilt/zoom. In:
  High-Quality Visual Experience, pp. 431--455. Springer (2010)

\bibitem{moon1988spherical}
Moon, P., Spencer, D.: Spherical coordinates (r, $\theta$, $\psi$). In: Field
  Theory Handbook, Including Coordinate Systems, Differential Equations, and
  Their Solutions, pp. 24--27. Springer-Verlag (1988)

\bibitem{ndikumana2019joint}
Ndikumana, A., Tran, N.H., Ho, T.M., Han, Z., Saad, W., Niyato, D., Hong, C.S.:
  Joint communication, computation, caching, and control in big data
  multi-access edge computing. IEEE Transactions on Mobile Computing
  \textbf{19}(6),  1359--1374 (2019)

\bibitem{nguyen2019optimal}
Nguyen, D.V., Tran, H.T., Pham, A.T., Thang, T.C.: An optimal tile-based
  approach for viewport-adaptive 360-degree video streaming. IEEE Journal on
  Emerging and Selected Topics in Circuits and Systems  \textbf{9}(1),  29--42
  (2019)

\bibitem{ozcinar2018omnidirectional}
Ozcinar, C., Cabrera, J., Smolic, A.: Omnidirectional video streaming using
  visual attention-driven dynamic tiling for vr. In: 2018 IEEE Visual
  Communications and Image Processing (VCIP). pp.~1--4. IEEE (2018)

\bibitem{ozcinar2017viewport}
Ozcinar, C., De~Abreu, A., Smolic, A.: Viewport-aware adaptive 360 video
  streaming using tiles for virtual reality. In: 2017 IEEE International
  Conference on Image Processing (ICIP). pp. 2174--2178. IEEE (2017)

\bibitem{rondon_unified_2020}
Rondón, M.F.R., Sassatelli, L., Aparicio-Pardo, R., Precioso, F.: A unified
  evaluation framework for head motion prediction methods in 360$^\circ$
  videos. In: Proceedings of the 11th {ACM} {Multimedia} {Systems}
  {Conference}. pp. 279--284. {MMSys} '20, Association for Computing Machinery,
  New York, NY, USA (2020)

\bibitem{noauthor_meeting_2013}
Sullivan, G.J., Ohm.J.R.: Meeting {Report} of 13th {JCT}-{VC} {Meeting} (2013),
  \url{http://phenix.it-sudparis.eu/jct/doc\_end\_user/current\_document.php?id=7746}

\bibitem{sun2019communications}
Sun, Y., Chen, Z., Tao, M., Liu, H.: Communications, caching, and computing for
  mobile virtual reality: Modeling and tradeoff. IEEE Transactions on
  Communications  \textbf{67}(11),  7573--7586 (2019)

\bibitem{suurballe_disjoint_1974}
Suurballe, J.W.: Disjoint paths in a network. Networks  \textbf{4}(2),
  125--145 (1974)

\bibitem{noauthor_how_2020}
Verizon: How far does {5G} reach? (2020),
  \url{https://www.verizon.com/about/news/how-far-does-5g-reach}

\bibitem{wang_zone-based_2019}
Wang, N., Shen, G., Bose, S.K., Shao, W.: Zone-{Based} {Cooperative} {Content}
  {Caching} and {Delivery} for {Radio} {Access} {Network} {With} {Mobile}
  {Edge} {Computing}. IEEE Access  \textbf{7},  4031--4044 (2019), conference
  Name: IEEE Access

\bibitem{yang_communication-constrained_2018}
Yang, X., Chen, Z., Li, K., Sun, Y., Liu, N., Xie, W., Zhao, Y.:
  Communication-{Constrained} {Mobile} {Edge} {Computing} {Systems} for
  {Wireless} {Virtual} {Reality}: {Scheduling} and {Tradeoff}. IEEE Access
  \textbf{6},  16665--16677 (2018), conference Name: IEEE Access

\bibitem{zhang2018cache}
Zhang, Y., Jiang, X., Wang, Y., Lei, K.: Cache and delivery of vr video over
  named data networking. In: IEEE INFOCOM 2018-IEEE Conference on Computer
  Communications Workshops (INFOCOM WKSHPS). pp. 280--285. IEEE (2018)

\end{thebibliography}

\end{document}